\title{The Complexity of Checking Non-Emptiness in
Symbolic Tree Automata} %TODO Please add
\author{Rodrigo Raya}{School of Computer and Communication Science \\ École Polytechnique Fédérale de Lausanne (EPFL), Switzerland \\
rodrigo.raya@epfl.ch}{}{}{}%TODO mandatory, please use full name; only 1 author per \author macro; first two parameters are mandatory, other parameters can be empty. Please provide at least the name of the affiliation and the country. The full address is optional. Use additional curly braces to indicate the correct name splitting when the last name consists of multiple name parts.
\authorrunning{Rodrigo Raya} %TODO mandatory. First: Use abbreviated first/middle names. Second (only in severe cases): Use first author plus 'et al.'
\keywords{automata, computational complexity, model theory} %TODO mandatory; please add comma-separated list of keywords
\newcommand\mpn[3]{\nodepart{one}   #1
                   \nodepart{two}   #2
                   \nodepart{three} #3}
\tikzset{%
  zeroarrow/.style = {-stealth,dashed},
  onearrow/.style = {-stealth,solid},
  c/.style = {circle,draw,solid,minimum width=2em,
        minimum height=2em},
  r/.style = {rectangle,draw,solid,minimum width=2em,
        minimum height=2em}
}
\newcommand{\cc}{\mbox{C}}
\newcommand{\np}{\mbox{NP}}
\newcommand{\N}{\mathbb{N}}
\newcommand{\qfbapa}{\textsf{QFBAPA}}
\newcommand{\parikh}{\textsf{Parikh}}
\begin{document}

\maketitle

\nolinenumbers

\begin{abstract}
We study the satisfiability problem of symbolic tree automata and decompose it into the satisfiability problem of the existential first-order theory of the input characters and the existential monadic second-order theory of the indices of the accepted words. We use our decomposition to obtain tight computational complexity bounds on the decision problem for this automata class and an extension that considers linear arithmetic constraints on the underlying effective Boolean algebra.
\end{abstract}

\section{Introduction}
\label{section:intro}
The purpose of this paper is to expose certain analogy that can be made between the computational complexity analysis of the decision problem of symbolic tree automata and the decision problem of a class of logical structures known as power structures in model theory. 

The notion of symbolic automata appeared first in \cite{watson_implementing_1996}. But it was not until \cite{veanes_rex_2010} that this class of automata regained attention. Symbolic automata have been used in a variety of applications including the analysis of regular expressions \cite{veanes_rex_2010, dantoni_minimization_2014}, string encoders \cite{hooimeijer_fast_2011, dantoni_extended_2015, hu_automatic_2017}, functional programs \cite{dantoni_fast_2015}, code generation, parallelisation \cite{saarikivi_fusing_2017} and symbolic matching \cite{saarikivi_symbolic_2019}. The specific subclass of symbolic tree automata (STAs) was later studied in the sequence of publications \cite{kaminski_tree_2008, fulop_forward_2014, veanes_symbolic_2015, dantoni_fast_2015, dantoni_minimization_2016}.

Several theoretical investigations have been carried out on computational aspects of the symbolic automata model, including \cite{dantoni_minimization_2014, tamm_theoretical_2018, argyros_learnability_2018}. In particular, the authors of \cite{veanes_monadic_2017} observed that such an automata model had been studied previously by Bès in  \cite{bes_application_2008}. In his paper, Bès introduced a class of multi-tape synchronous finite automata whose transitions are labelled by first-order formulas. He then proved various properties of the languages accepted by such automata including closure under Boolean, rational, and the projection operations, logical characterisations in terms of MSO logic and the Eilenberg-Elgot-Shepherdson formalism \cite{eilenberg_sets_1969} as well as decidability properties. Remarkably, the paper showed that the notion of recognisability for such automata coincides with that of definability for certain generalised weak powers, first-studied by Feferman and Vaught \cite{feferman_first_1959}. In the concluding remarks of his work, Bès noted that \say{all results in this paper can be extended to the case of infinite words as well as (in)finite binary trees, by relying on classical decidability results for MSO theories}.

The techniques of Feferman and Vaught allow decomposing the decision problem for the first-order theory of a product of structures\footnote{The notion of the theory of a structure for first-order and monadic second-order theories is standard in model theory. We refer the interested reader to the book of Hodges \cite{hodges_model_1993} for the details.}
, $Th(\prod_i \mathcal{M}_i)$ into the first-order theory of the structures $\mathcal{M}_i$, $Th(\mathcal{M}_i)$, and the monadic second-order theory of the index set $I$, $Th^{mon}(\langle I, \ldots \rangle)$, where the structure $\langle I, \ldots \rangle$ may contain further relations such as a finiteness predicate, a cardinality operator, etc. If the theory of the components $Th(\mathcal{M}_i)$ is decidable for each $i \in I$, then the decision problem reduces to that of the theory $Th^{mon}(\langle I, \ldots \rangle)$. To analyse these structures, Feferman and Vaught extended results going back to Löwenheim, Skolem and Behmann \cite{lowenheim_uber_1915, skolem_untersuchungen_1919, behmann_beitrage_1922}. Technically, the decomposition is expressed in terms of so-called reduction sequences.

It is known \cite{dawar_model_2007} that many model-theoretic constructions incur in non-elementary blow-ups in the formula size. This includes the case of the size of the Feferman-Vaught reduction sequences in the case of disjoint unions. Perhaps for this reason, no computational complexity results have been obtained for the theory of symbolic tree automata and related models. Instead, the results in the literature \cite{dantoni_power_2017, fisman_complexity_2021, dantoni_automata_2021} refer to the decidability of the satisfiability problem of the monadic predicates or provide asymptotic run-times rather than a refined computational complexity classification, which could help to evaluate the speed and scale to which we can hope to solve the satisfiability problem.

As a \textbf{main contribution}, we show how to reduce the satisfiability problem for symbolic tree automata to the satisfiability problem of the existential first-order theory of the input characters and the existential monadic second-order theory of the indices. This decomposition allows us to derive tight complexity bounds for the decision problem of the automaton in the precise sense of Corollary~\ref{cor:complexity}. We then study an extension of the formalism of symbolic tree automata which also imposes linear arithmetic constraints on the cardinalities of the Venn regions of the underlying effective Boolean algebra. In particular, this extension allows expressing the number of occurrences of a particular kind of letter in a word. We show in Corollary~\ref{cor:complexitycard} that the computational complexity of the corresponding satisfiability problem is the same as the one for the simpler model without cardinalities. Similar extensions for related models of automata are considered in the literature on data words \cite{figueira_reasoning_2022}.

\textbf{Organisation of the paper.} Section~\ref{section:symbolic} introduces symbolic tree automata. Section~\ref{section:feferman} gives a preliminary Feferman-Vaught decomposition of symbolic tree automata in terms of the theory of the elements and the theory of the indices. Section~\ref{section:decision} describes the decision procedure with which, in Section~\ref{section:with}, after presenting the quantifier-free theory of Boolean algebra with Presburger arithmetic, we obtain the tight complexity bounds announced. Section~\ref{section:cardinalities} describes the extension of symbolic tree automata that uses linear arithmetic constraints over the cardinalities of the automaton's underlying effective Boolean algebra and proves the corresponding upper bounds for the associated satisfiability problem. Section~\ref{section:conclusion} concludes the paper.

\section{Symbolic Tree Automata (STA)}
\label{section:symbolic}
In this section, we introduce the automata model that we will study in the rest of the paper. 

Unlike traditional tree automata, symbolic automata read input characters over a not necessarily finite domain, which has the structure of a Boolean algebra of sets defined by a family of monadic predicates. To ensure compatibility with the Boolean algebra operations, the family of monadic predicates defining the sets needs to be closed under propositional operations and contain formulae denoting the empty set and the universe. Furthermore, the definition requires that checking satisfiability of the monadic predicates is decidable. In later sections, we will refine this assumption with different complexity-theoretic bounds.

\begin{definition}[\cite{dantoni_automata_2021}]
\label{def:eba}
An effective Boolean algebra $\mathcal{A}$ is a tuple 
\[
\left(\mathfrak{D}, \Psi, \llbracket \cdot \rrbracket, \perp, \top, \vee, \wedge, \neg\right)
\]
where $\mathfrak{D}$ is a set of domain elements, $\Psi$ is a set of unary predicates over $\mathcal{D}$ that are closed under the Boolean connectives, with $\perp, \top \in \Psi$ and $\llbracket \cdot \rrbracket: \Psi \rightarrow 2^{\mathcal{D}}$ is a function such that \begin{enumerate*}
\item $\llbracket \perp \rrbracket=\emptyset$, 
\item $\llbracket \top \rrbracket=\mathfrak{D}$, and 
\item for all $\psi, \psi_1, \psi_2 \in \Psi$, we have that
\begin{enumerate*}
\item $\llbracket \psi_1 \vee \psi_2 \rrbracket=\llbracket \psi_1 \rrbracket \cup \llbracket \psi_2 \rrbracket$
\item $\llbracket \psi_1 \wedge \psi_2 \rrbracket=\llbracket \psi_1 \rrbracket \cap \llbracket \psi_2 \rrbracket$
\item $\llbracket \lnot \psi \rrbracket=\mathfrak{D} \backslash \llbracket \psi \rrbracket$. 
\end{enumerate*}
\item Checking $\llbracket \psi \rrbracket \neq \emptyset$ is decidable. 
\end{enumerate*} 
A predicate $\psi \in \Psi$ is atomic if it is not a Boolean combination of predicates in $\Psi$. 
\end{definition}

We will give now two examples of the notion of effective Boolean algebra. From the various examples in the literature, we choose one that matches one of our initial motivations: to generalise the complexity results obtained for array theories 
in \cite{alberti_cardinality_2017, raya_vmcai_2022}. We observe that the notion of SMT algebra in \cite[Example~2.3]{dantoni_automata_2021} precisely corresponds to the language introduced in \cite[Definition~5]{raya_vmcai_2022} but omitting cardinality constraints. We take this as a first example of effective Boolean algebra. 

\begin{example} \label{example:smt}
The SMT algebra for a type $\tau$ is the tuple $(\mathcal{D}, \Psi, \llbracket \cdot \rrbracket, \bot, \top, \lor, \land, \lnot)$ where $\mathcal{D}$ is the domain of $\tau$, $\Psi$ is the set of all quantifier-free formulas with one fixed free variable of type $\tau$, $\llbracket \cdot \rrbracket$ maps each monadic predicate to the set of its satisfying assignments, $\bot$ denotes the empty set (which can be represented by the formula $x \neq x$), $\top$ denotes the universe $\mathcal{D}$ (which can be represented by the formula $x = x$) and $\lor, \land, \lnot$ denote the Boolean algebra operations of union, intersection, and complement respectively (which can be represented by the propositional operations on quantifier-free formulae). 
\end{example}

In applications, it is often useful to consider effective Boolean algebras whose generating monadic predicates use particular representations of formulae. In particular, Example~\ref{example:smt} can be contrasted with other representations of the monadic predicates, which consider implementation details. An example of the latter is the BDD effective Boolean algebra described in \cite{dantoni_minimization_2016} which assumes that the set of elements of the underlying domain are expressed using binary decision diagrams \cite{bryant_graph-based_1986}.

\begin{example} \label{example:bit}
The BDD algebra $\mathcal{B} = (\mathbb{N}, \Psi, \llbracket \cdot \rrbracket, \bot, \top, |, \&, \overline{\cdot})$ has the set of natural numbers $\mathbb{N}$ as its universe and $\Psi$ is the Boolean closure of BDDs $\beta_i$ such that $\llbracket \beta_i \rrbracket$ is the set of natural numbers such that the i-th bit of $n$ in binary representation is one, $\bot$ denotes the BDD representing the empty set, $\top$ denotes the BDD representing the universal set and $|, \&, \overline{\cdot}$ denote the Boolean algebra operation of union, intersection, and complement. For instance, $\beta_3 \land \overline{\beta_0}$ denotes the set of numbers matching the binary bit-pattern $\ldots 1 \cdot \cdot 0$, which is satisfied by $8, 24, \ldots$.
\end{example}

% \begin{example} \label{example:bit}
% The powerset algebra $2^{bv(k)}$ is the tuple $(D,\Psi, \llbracket \cdot \rrbracket, \bot, \top, \lor, \land, \lnot)$ where $\mathcal{D}$ is the set $bv(k)$ of all non-negative integers less than $2^k$ or equivalently, all $k$-bit bit-vectors for some $k > 0$, $\Psi$ is the set of BDDs of depth $k$, $\llbracket \cdot \rrbracket$ maps each BDD $\beta$ to the set of all integers $n$ such that the binary representation of $n$ is a solution of $\beta$, $\bot$ denotes the BDD representing the empty set, $\top$ denotes the BDD representing the universal set and $\lor, \land, \lnot$ denote the Boolean algebra operation of union, intersection, and complement as they are implemented for BDDs.
% \end{example}

We now introduce the automata model we will investigate in the paper. As in \cite{veanes_symbolic_2015}, we will assume that our automata read binary trees. 

\begin{definition}
A binary $\Sigma$-tree is a function $\tau: A \to \Sigma$ where $A$ is a finite subset of $\{0,1\}^*$ closed under the initial segment relation (i.e. if $uv \in A$ then $u \in A$). $\Sigma^{\#}$ is the class of all binary $\Sigma$-trees. $\Lambda$ the function with domain $\emptyset$ also known as the empty tree. For $\sigma \in \Sigma$ and $\tau,\tau' \in \Sigma^{\#}$, $\sigma[\tau,\tau']$ is the $\Sigma$-tree with root $\sigma$, left subtree $\tau$ and right subtree $\tau'$.
\end{definition}

The crucial difference with traditional tree automata comes in the definition of the transition relation which occurs at two levels: the symbolic level, at which we only consider the particular formula that is satisfied, and the concrete level in which we also consider the input character from the effective Boolean algebra that satisfies the predicate.

\begin{definition}[\cite{veanes_symbolic_2015}]
\label{def:treeautomata}
A symbolic tree automaton (STA) is a tuple 
\[
M = (\mathcal{A}, Q, q_0, F, \Delta)
\]
where \begin{enumerate*}
\item $\mathcal{A}$ is an effective Boolean algebra. 
\item $Q$ is a finite set of states.
\item $q_0 \in Q$ is the initial state.
\item $F \subseteq Q$ is the set of final states.
\item $\Delta \subseteq Q \times \Psi_{\mathcal{A}} \times Q \times Q$ is a finite set of transitions. 
\end{enumerate*}

A symbolic transition $\rho=\left(q_{1}, \psi, q_{2},q_3\right) \in \Delta$, also denoted $(q_{1},q_{2}) \stackrel{\psi}{\rightarrow} q_{3}$, has source states $q_{1}$ and $q_{2}$, target state $q_{2}$, and guard $\psi$. For $d \in \mathfrak{D}$, the concrete transition $(q_{1},q_{2}) \stackrel{d}{\rightarrow} q_{3}$ denotes that there exists a symbolic transition $(q_{1}, q_{2}) \stackrel{\psi}{\rightarrow} q_{3} \in \Delta$ such that $d \in \llbracket \psi \rrbracket$.

The language of $M$ at state $q \in Q$, denoted by $\mathcal{L}_q(M)$, is the smallest subset of $\mathcal{D}^{\#}$ such that \begin{enumerate*}
\item if $q \in F$ then $\Lambda \in \mathcal{L}_q(M)$, 
\item if $(q_1, \psi, q_2, q_3) \in \Delta$, $d \in \llbracket \psi \rrbracket$ and for $i \in \{1, 2\}$, $\tau_i \in \mathcal{L}_{q_i}(M)$, then $d[\tau_1, \tau_2] \in \mathcal{L}_{q_3}(M)$. 
\end{enumerate*}
The language of $M$ is $\mathcal{L}(M) = \mathcal{L}_{q_0}(M)$.
\end{definition}

We next give examples of automata running over the algebras of Example~\ref{example:smt} and \ref{example:bit}. 

\begin{example}[\cite{veanes_symbolic_2015}] 
\label{ex:smt-automaton}
We consider the language of linear arithmetic over the integers. We set three formulae $\psi_{> 0}(x) \equiv x > 0, \psi_{< 0}(x) \equiv x < 0, \psi_{= 0}(x) \equiv x = 0$ satisfied by all positive integers, all negative integers and zero, respectively. The symbolic tree automaton  with states $q_{root}, q_{-}, q_0, q_{+}, q_{\epsilon}$, final states $q_{\epsilon}$, initial state $q_{root}$ and transitions
\begin{align*}
(q_{-}, q_{+}, \psi_{= 0}, q_{root}) & & (q_{+}, q_{-}, \psi_{= 0}, q_{0}) & & (q_{\epsilon}, q_{\epsilon}, \psi_{= 0}, q_{0})\\ 
(q_{-}, q_{0}, \psi_{< 0}, q_{-}) & & (q_{\epsilon}, q_{\epsilon}, \psi_{< 0}, q_{-}) & & \\
(q_{0}, q_{+}, \psi_{> 0}, q_{+}) & & 
(q_{\epsilon}, q_{\epsilon}, \psi_{> 0}, q_{+}) & & 
\end{align*}
accepts all trees such that the root has a label $0$, its left son is a $-$ node and its right son is a $+$ node, every $-$ node has a negative label and is either a leaf or its left son is a $-$ node and its right son is a $0$ node. Similarly, every $+$ node has a positive label and is either a leaf or its right son is a $+$ node and its left son is a $0$ node. For example, the following tree would be accepted:

\centering
\begin{tikzpicture}
\Tree [.0 
[.-1 [.-2 [.$\epsilon$ ] [.$\epsilon$ ] ] 
[.0 [.3 [.$\epsilon$ ] [.$\epsilon$ ] ] [.-4 [.$\epsilon$ ] [.$\epsilon$ ] ] ] ]
[.6 [.0 [.$\epsilon$ ] [.$\epsilon$ ] ] [.5 [.$\epsilon$ ] [.$\epsilon$  ] ] 
] 
]
\end{tikzpicture}
\end{example}

\newsavebox{\mydiagram}
\sbox{\mydiagram}{
    \begin{tikzpicture}[node distance=.3cm and .3cm]\footnotesize
    \node (b6) {$b_6$};
    \node (b5) [below right=of b6] {$b_5$};
    \node (b4) [below right=of b5] {$b_4$};
    \node (b3) [below right=of b4] {$b_3$};
    \node (b2) [below left=of b3] {$b_2$};
    \node (b1) [below=of b2] {$b_1$};
    \node (top) [below right=of b1] {$\top$};
    \node (bot) [below left=of b1] {$\bot$};

    \draw[zeroarrow] (b6) -- (b5);
    \draw[zeroarrow] (b6) -- (b5);
    \draw[onearrow] (b5) -- (b4);
    \draw[onearrow] (b4) -- (b3);
    \draw[onearrow] (b3) -- (b2);
    \draw[zeroarrow] (b2) -- (b1);
    \draw[zeroarrow] (b1) -- (top);
    \draw[zeroarrow] (b3) -- (top);
    \draw[onearrow] (b2) -- (bot);
    \draw[onearrow] (b6) -- (bot);
    \draw[zeroarrow] (b5) -- (bot);
    \draw[zeroarrow] (b4) -- (bot);

    \end{tikzpicture}
}

\begin{example}[\cite{dantoni_minimization_2016}]
\label{ex:bdd-automaton}
We consider the language of the BDD algebra in Example~\ref{example:bit}. The following symbolic tree automaton accepts all trees whose labels represent integers such that whenever the i-th bit of such integer in binary representation is one, then the j-th bit of the integer in binary representation is also one. The automaton has a single state $q$ and a single transition rule $(q,q,\overline{\beta_i} | \beta_j, q)$:

\centering
\begin{tikzpicture}[
  arrows={[scale=.75]},
  every edge quotes/.append style={inner sep=+.15em, node font=\footnotesize}]
\node (q1) [state, accepting] {$q$};
\node (bb)[above right=0mm and 1cm of q1] {$\overline{\beta_i} \vert \beta_j$};
\path[thick, in=180, at end]
  (q1) edge[out=45, "" above left] ([yshift= .75ex]bb.west)
       edge[out=30, "" below left] ([yshift=-.75ex]bb.west)
  (bb) edge[-Latex, out=0, in=-15] (q1);
\end{tikzpicture}
\end{example}

\section{Decomposition through Shared Set Variables}
\label{section:feferman}
In this section, we start with a symbolic tree automaton $M = \left(\mathcal{A}, Q, q_{0}, F, \Delta\right)$ and denote by $\psi_1, \ldots, \psi_k, \ldots$ the atomic predicates in the underlying effective Boolean algebra $\mathcal{A}$. Our first observation is that the definition of symbolic tree automaton allows assuming that the set of these predicates is finite. 

\begin{lemma} 
\label{lem:fin}
There exists a symbolic tree automaton $M' = \left(\mathcal{A}', Q, q_{0}, F, \Delta\right)$ such that $\mathcal{L}(M) = \mathcal{L}(M')$ and the cardinality of $\Psi_{\mathcal{A}'}$ is finite. 
\end{lemma}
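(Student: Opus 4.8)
The plan is to exploit the fact that the transition set $\Delta$ is finite, so that only finitely many guards from $\Psi_{\mathcal{A}}$ are ever used by $M$. Concretely, let $G = \{\psi \in \Psi_{\mathcal{A}} : (q_1, \psi, q_2, q_3) \in \Delta \text{ for some } q_1, q_2, q_3 \in Q\}$; since $\Delta$ is finite, $G = \{\psi_1, \ldots, \psi_n\}$ is finite. I would then define $\mathcal{A}'$ to have the same domain $\mathfrak{D}$ and the same denotation map $\llbracket \cdot \rrbracket$ as $\mathcal{A}$, but with $\Psi_{\mathcal{A}'}$ equal to the Boolean closure of $G \cup \{\bot, \top\}$, i.e. the smallest subset of $\Psi_{\mathcal{A}}$ containing $\bot, \top, \psi_1, \ldots, \psi_n$ and closed under $\vee, \wedge, \neg$. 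By construction $\mathcal{A}'$ inherits the three semantic clauses of Definition~\ref{def:eba} and the decidability of emptiness checking from $\mathcal{A}$, since $\Psi_{\mathcal{A}'} \subseteq \Psi_{\mathcal{A}}$; it is therefore again an effective Boolean algebra, and $\Delta \subseteq Q \times \Psi_{\mathcal{A}'} \times Q \times Q$ because every guard of $\Delta$ lies in $G \subseteq \Psi_{\mathcal{A}'}$. Hence $M' = (\mathcal{A}', Q, q_0, F, \Delta)$ is a well-formed STA.

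The finiteness of $\Psi_{\mathcal{A}'}$ is the technical heart. Viewing the predicates through their denotations, the sets $\llbracket \psi_1 \rrbracket, \ldots, \llbracket \psi_n \rrbracket$ generate a Boolean subalgebra of $2^{\mathfrak{D}}$ whose atoms are the non-empty intersections $\bigcap_{i=1}^{n} S_i$ with $S_i \in \{\llbracket \psi_i \rrbracket, \mathfrak{D} \setminus \llbracket \psi_i \rrbracket\}$. There are at most $2^n$ such atoms, so the generated algebra has at most $2^{2^n}$ elements. I would make this precise by putting every element of the Boolean closure into a canonical disjunctive normal form over the generators $\psi_1, \ldots, \psi_n$ (equivalently, by identifying predicates with equal denotations), which yields a set $\Psi_{\mathcal{A}'}$ of size at most $2^{2^n}$, in particular finite, as required.

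It remains to argue $\mathcal{L}(M) = \mathcal{L}(M')$, which I expect to be immediate from the definitions. The inductive definition of $\mathcal{L}_q(\cdot)$ depends on the automaton only through the final states $F$ and through the concrete transitions $(q_1, q_2) \xrightarrow{d} q_3$, and the latter are determined by the set $\{(q_1, q_2, q_3, \llbracket \psi \rrbracket) : (q_1, \psi, q_2, q_3) \in \Delta\}$. Since $M$ and $M'$ share $Q$, $q_0$, $F$ and $\Delta$, and since $\llbracket \cdot \rrbracket$ agrees on all guards of $\Delta$, the concrete transition relations of $M$ and $M'$ coincide; a straightforward induction on the height of trees then gives $\mathcal{L}_q(M) = \mathcal{L}_q(M')$ for every $q \in Q$, and in particular $\mathcal{L}(M) = \mathcal{L}_{q_0}(M) = \mathcal{L}_{q_0}(M') = \mathcal{L}(M')$.

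The only real obstacle is the one addressed in the second paragraph: because predicates are a priori syntactic objects, the naive Boolean closure of a finite generating set is infinite as a set of formulas, so finiteness must be recovered by working up to the semantic equivalence induced by $\llbracket \cdot \rrbracket$ (or, equivalently, by selecting one canonical representative per denotation). Once predicates are identified with their denotations in $2^{\mathfrak{D}}$, finiteness is the classical fact that a Boolean algebra generated by finitely many elements is finite.
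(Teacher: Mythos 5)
Your proof is correct and takes essentially the same route as the paper: restrict the predicate set to the Boolean closure of the finitely many guards occurring in $\Delta$, keep all other components of $\mathcal{A}$ and of the automaton unchanged, and observe that the language is preserved. In fact, your treatment is more careful than the paper's own proof, which simply asserts that the Boolean closure is finite---a claim that, as you point out, only holds after identifying predicates with equal denotations (or fixing canonical DNF representatives), the step you make explicit.
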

\begin{proof}
By definition, the automaton $M$ has a finite number of transitions. We take $\Psi_{\mathcal{A}'}$ to be the Boolean closure of the predicates occurring in these transitions. It follows that $\Psi_{\mathcal{A}'}$ is a finite set. We define the remaining components of $\mathcal{A}'$ as those in the definition of $\mathcal{A}$.  Since the structure of the automaton is unchanged under this transformation, it follows that the two languages are equal, i.e. $\mathcal{L}(M) = \mathcal{L}(M')$. 
\end{proof}

From Lemma~\ref{lem:fin}, it follows that without loss of generality we may assume that $\Psi_{\mathcal{A}}$ is finite. Thus, the set of atomic predicates (see Definition~\ref{def:eba}) is finite too. In the remaining of the paper, we will work under this assumption, and we will write $\phi_1,\ldots,\phi_k$ for the generators of the effective Boolean algebra used by the symbolic finite automaton $M$. Similarly, we will write $\psi_1,\ldots,\psi_m$ for the actual predicates used in the transitions of $M$.  We will decompose the study of $\mathcal{L}(M)$ into the study of the properties of the input characters in $\mathcal{D}$ and the indexing properties induced by the transition structure of the automaton. Both kinds of properties will refer to variables representing sets of indices, to stay synchronised with each other. This methodology for combining theories had been previously studied in \cite{wies_combining_2009}.

To specify the properties of the input characters in $\mathcal{D}$, we use set interpretations of the form: 
\begin{equation} \label{eq:1}
\bigwedge_{i = 1}^k S_i = \Set{ n \in \{0,1\}^* | \phi_i(d(n)) } = \llbracket \phi_i \rrbracket
\end{equation}
where $d(n)$ is the element occurring at position $n$ in the tree $d \in \mathcal{D}^{\#}$. These sets can be pictured via a Venn diagram of interpreted sets, such as the one in Figure~\ref{fig:venn}. Each formula in $\Psi_{\mathcal{A}}$ corresponds to a particular Venn region in this diagram and can be referred to using a Boolean algebra expression on the variables $S_1, \ldots, S_k$, thanks to the set interpretation (\ref{eq:1}).

A concrete transition $(q_1,q_2) \stackrel{d}{\to} q_3$ requires a value $d \in \mathcal{D}$. This value will lie in some elementary Venn region of the diagram in Figure~\ref{fig:venn}, i.e. in a set of the form $S_1^{\beta_1} \cap \ldots \cap S_k^{\beta_k}$ where $\beta = (\beta_1,\ldots,\beta_k) \in \{0,1\}^k, S^0 := S^c$ and $S^1 := S$. We will denote such Venn region with the bit-string $\beta$. To specify the transition structure of the automaton, what is relevant to us is the region of the Venn diagram, not the specific value that it takes there. Thus, we can relabel the transitions of the automaton by the propositional formulae corresponding to the monadic predicates they held originally. 

\begin{example} \label{ex:prop}
In Example~\ref{ex:smt-automaton} the labelling predicates $\psi_{\text{=}}, \psi_{< 0}$ and $\psi_{> 0}$ would be replaced by propositional formulae $S_1, S_2$ and $S_3$. Similarly, if in Example~\ref{ex:bdd-automaton}, we take as atomic formulae the predicates $\beta_i$ then the formula $\overline{\beta_i} | \beta_j$ corresponds to the propositional formula $\lnot S_i \lor S_j$. 
\end{example}

It follows that a run of the automaton can be encoded as a tree of bit-strings $\tau: A \subseteq \{0,1\}^* \to \{0,1\}^k$ (with $A$ prefix-closed) and that these bit-strings only need to satisfy the propositional formulae corresponding to the predicates labelling the transitions of the automaton. Figure~\ref{fig:run} represents one such run over an uninterpreted Venn diagram.

We denote by $L_1,\ldots,L_m$ such propositional formulae and by $M(L_1,\ldots,L_m)$ the set of bit-string trees accepted by $M$, which we call \textit{tree tables} following the terminology of Kleene \cite{kleene_representation_1956}. Lemma~\ref{lem:automatonlanguage} observes that the language $\mathcal{L}(M)$ can be expressed in terms of set interpretations of the form~(\ref{eq:1}) and the condition: 
\begin{equation} \label{eq:2}
\exists \tau \in M(L_1,\ldots,L_m). \bigwedge_{i = 1}^k S_i = \Set{ n \in \{0,1\}^* | \tau_i(n) }
\end{equation}
where $\tau_i(n)$ denotes the $i$-th bit of $\tau$ at position $n \in \{0,1\}^*$.

This is a Feferman-Vaught decomposition in the sense that we explain next. Observe first that following the automata-logic connection discovered by Büchi \cite{buchi_weak_1960} and extended by Doner \cite{doner_tree_1970}, the tuple of sets $(S_1,\ldots,S_k)$ definable in (\ref{eq:2}) are precisely those tuples of sets definable in weak second order logic of two successors. Thus, what changes in the expression of formula~(\ref{eq:2}) is the particular representation of these relations. Lemma~\ref{lem:automatonlanguage} decomposes the satisfiability problem of symbolic tree automata into the satisfiability problem of the existential first-order theory of the input characters and the satisfiability problem of a certain representation of the monadic second-order theory of two successors. 

\begin{lemma}[Feferman-Vaught decomposition for STAs] \label{lem:automatonlanguage}
\begin{align*}
\mathcal{L}(M) = \Big\{ d \in \mathcal{D}^* \Big| &\exists \tau \in M(L_1,\ldots,L_m).\\ &\bigwedge_{i = 1}^k S_i = \Set{ n \in \{0,1\}^* | \phi_i(d(n)) } = \Set{ n \in \{0,1\}^* | \tau_i(n) } \Big\}
\end{align*}
\end{lemma}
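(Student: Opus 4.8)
The plan is to reduce the set equality to a single shape-preserving correspondence between $\mathcal D$-trees and bit-string trees, and then to establish that correspondence by structural induction. The proof is a double inclusion, but both directions flow from the same construction, so I would package them together.

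First I would record the propositional engine of the decomposition. For $a\in\mathcal D$ write $\pi(a)=(\chi_1(a),\ldots,\chi_k(a))\in\{0,1\}^k$, where $\chi_i(a)=1$ iff $a\in\llbracket\phi_i\rrbracket$. For every $\psi\in\Psi_{\mathcal A}$, with associated propositional formula $L$ obtained from $\psi$ by replacing each generator $\phi_i$ with the variable $S_i$, I claim $a\in\llbracket\psi\rrbracket \iff \pi(a)\models L$. This is proved by induction on the Boolean structure of $\psi$: the base cases $\phi_i,\bot,\top$ are immediate from the definition of $\pi$ and from $\llbracket\bot\rrbracket=\emptyset$, $\llbracket\top\rrbracket=\mathfrak D$, while the steps for $\vee,\wedge,\lnot$ are exactly the homomorphism clauses of Definition~\ref{def:eba}. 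In particular each transition guard $\psi_j$ and its relabelling $L_j$ satisfy $a\in\llbracket\psi_j\rrbracket \iff \pi(a)\models L_j$.

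Next I would extend $\pi$ to trees pointwise: for $d\in\mathcal D^{\#}$ let $\pi(d)$ be the bit-string tree with the same domain as $d$ and $\pi(d)(n)=\pi(d(n))$. Writing $M_q(L_1,\ldots,L_m)$ for the bit-string trees accepted from state $q$ by the relabelled automaton (so $M(L_1,\ldots,L_m)=M_{q_0}(L_1,\ldots,L_m)$), the central claim is that for every state $q$ and every $d\in\mathcal D^{\#}$ we have $d\in\mathcal L_q(M)\iff\pi(d)\in M_q(L_1,\ldots,L_m)$, which I would prove by structural induction on $d$. For $d=\Lambda$ both sides hold iff $q\in F$. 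For $d=a[d_1,d_2]$, membership in $\mathcal L_q(M)$ means some symbolic transition $(q_1,\psi_j,q_2,q)\in\Delta$ has $a\in\llbracket\psi_j\rrbracket$ with $d_1\in\mathcal L_{q_1}(M)$ and $d_2\in\mathcal L_{q_2}(M)$; by the propositional engine $a\in\llbracket\psi_j\rrbracket$ is equivalent to $\pi(a)\models L_j$, and by the induction hypothesis the two recursive memberships transfer to $\pi(d_1),\pi(d_2)$, so the relabelled transition $(q_1,L_j,q_2,q)$ accepts $\pi(a)[\pi(d_1),\pi(d_2)]=\pi(d)$ from $q$. The converse reading is identical, since $\pi$ preserves the tree shape and both automata share the same transition graph. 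Connecting this to the stated equality is then routine: because $\pi(d)$ has the same domain as $d$, for each $i$ we get $\{n:\pi(d)_i(n)=1\}=\{n:\phi_i(d(n))\}$, so $\tau=\pi(d)$ realises the conjunction in the lemma, giving the left-to-right inclusion; for the converse a matching $\tau$ forces $\tau=\pi(d)$, whence $d\in\mathcal L(M)$.

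The step I expect to require the most care is exactly that last implication, namely checking that the set-matching condition pins $\tau$ down to $\pi(d)$. The labels agree wherever both trees are defined, so the only delicate point is that the two domains coincide: a position can lie in one domain but not the other only if its bit-vector is all-zero there, i.e.\ a node whose element satisfies none of the generators $\phi_i$, and one must rule out such ``invisible'' nodes appearing in an accepted tree table on one side without appearing on the other. This is where the shared transition graph and the shape-preservation of $\pi$ do the work, and I would fold domain equality into the induction hypothesis of the central claim so that it is available for free at this final step.
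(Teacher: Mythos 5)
Your route is the paper's own: your map $\pi$ is exactly the paper's ``membership of the values $d(i)$ in the elementary Venn regions,'' the propositional engine is the relabelling $\psi_j \mapsto L_j$, and your state-indexed induction $d \in \mathcal{L}_q(M) \iff \pi(d) \in M_q(L_1,\ldots,L_m)$ is a careful writeup of the paper's two-sentence sketch; the left-to-right inclusion is fine. The gap is the final step of the converse, and it is precisely the point you flagged: the claim that a matching $\tau$ forces $\tau = \pi(d)$ is false, and your proposed repair --- folding domain equality into the induction hypothesis of the central claim --- cannot close it. The central claim only ever compares $d$ with $\pi(d)$, which share a domain by construction; it says nothing about the given $\tau$, and the set-matching condition $\bigwedge_i \{n \mid \phi_i(d(n))\} = \{n \mid \tau_i(n)\}$ is genuinely blind to any node whose bit-vector is all-zero, on either side. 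No amount of induction bookkeeping can recover information that the hypothesis does not contain.

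Concretely, take $k=1$ and the automaton with the single transition $(q_f,q_f,\phi_1,q_0)$, $F=\{q_f\}$, initial state $q_0$, so that $\mathcal{L}(M)$ consists exactly of the one-node trees whose label lies in $\llbracket\phi_1\rrbracket$. Let $d$ be the tree with domain $\{\epsilon,0\}$ whose root label satisfies $\phi_1$ and whose other label does not, and let $\tau$ be the accepted one-node table with bit $1$ at the root. Then $\{n \mid \phi_1(d(n))\} = \{\epsilon\} = \{n \mid \tau_1(n)\}$ and $\tau \in M(L_1)$, so $d$ belongs to the right-hand side of the lemma, yet $d \notin \mathcal{L}(M)$: the child of the root is invisible to the sets. (Dually, with guard $\lnot\phi_1$ the empty tree $\Lambda$ pairs with an accepted one-node table whose bit is $0$.) So the step ``the labels agree wherever both trees are defined, hence $\tau=\pi(d)$'' fails because nothing forces the two domains to agree. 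To be fair, this is a defect of the statement as printed, which the paper's own proof glosses over in the same way: its right-to-left sentence presumes a run ``corresponding to'' $d$, i.e.\ a $\tau$ with $\mathrm{dom}(\tau)=\mathrm{dom}(d)$. The honest fix is to make that explicit --- add $\mathrm{dom}(\tau)=\mathrm{dom}(d)$ to the conjunction, or equivalently track the universe as an extra interpreted set $S_0 = \{n \mid \top(d(n))\} = \mathrm{dom}(\tau)$ so that every node is visible. Under that hypothesis your argument does pin $\tau$ down to $\pi(d)$, and your induction then completes both inclusions.
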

\begin{proof}
The proof uses the definition of $\mathcal{L}(M)$ and $M(L_1,\ldots,L_m)$. For the left to right inclusion, one defines $\tau$ using the membership of the values $d(i)$ in the elementary Venn regions $\beta_i$. For the right to left inclusion, the definition of $M(L_1,\ldots,L_m)$ ensures that there is an accepting run of $M$ corresponding to the value $d$ thanks to the interpretations of the sets $S_i$. 
\end{proof}

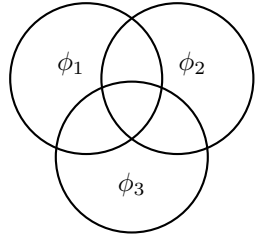
\begin{figure} 
\centering
\begin{tikzpicture}[thick]
\draw (0,0) circle (1) node[above,shift={(0,1)}] {};
\draw (1.2,0) circle (1) node[above,shift={(0,1)}] {};
\draw (.6,-1.04) circle (1) node[shift={(1.1,-.6)}] {};

\node at (-.2,.2) {$\phi_1$};
\node at (1.4,.2) {$\phi_2$};
\node at (0.6,-1.4) {$\phi_3$};
\end{tikzpicture}
\caption{A Venn diagram representing a finitely generated effective Boolean algebra with atomic predicates $\phi_1,\phi_2$ and $\phi_3$.}
\label{fig:venn}
\end{figure}

\begin{figure} 
\centering
\begin{tikzpicture}[thick]
\draw (0,0) circle (1) node[above,shift={(0,1)}] {};
\draw (1.2,0) circle (1) node[above,shift={(0,1)}] {};
\draw (.6,-1.04) circle (1) node[shift={(1.1,-.6)}] {};

\node at (-1.3,.2) {$S_1$};
\node at (2.5,.2) {$S_2$};
\node at (.6,-2.3) {$S_3$};

\node at (1.4,.2) {};
\node at (-.2,.2) {};
\node at (0.6,-1.4) {};

\draw [-to] (-.5,0) -- (.6,.4);
\draw [-to] (0,-.6) -- (.6,.34);
\draw [-to] (.6,.4) -- (1.5,.4);
\draw [-to] (.6,-.4) -- (1.5,.34);
\draw [-to] (1.4,-.7) -- (.6,-.45);
\draw [-to] (0,-1.6) -- (.55,-.45);
\end{tikzpicture}
\tikzset{every label/.style = {font=\footnotesize\sffamily\bfseries}}
    \begin{forest}
for tree = {
    rectangle split,    
    rectangle split parts=3,
    draw,
    rectangle split draw splits=false,
    inner ysep=2pt,
    parent anchor=south,
    child anchor=north,
    edge = {semithick},
    l sep=7mm,
    s sep=5mm,
            }
[\mpn{0}{1}{0}, fill=yellow!30
    [\mpn{1}{1}{0}, fill=cyan!30
            [\mpn{1}{0}{0}, fill=red!30]
            [\mpn{1}{0}{1}, fill=olive!30]
    ]
    [\mpn{1}{1}{1}, fill=green!30
            [\mpn{0}{0}{1}, fill=teal!30]
            [\mpn{0}{1}{1}, fill=orange!30]
    ]
]
    \end{forest}
\caption{A tree table accepted by a symbolic tree automaton represented over an uninterpreted Venn diagram (left) and as a bit-string tree (right). According to Doner's interpretation, the sets $A,B,C$ are $A = \{ 0,1,00,01 \}, B = \{ 
\epsilon,0,1,11 \}$ and $C = \{ 1, 01, 10, 11 \}$.}
\label{fig:run}
\end{figure}
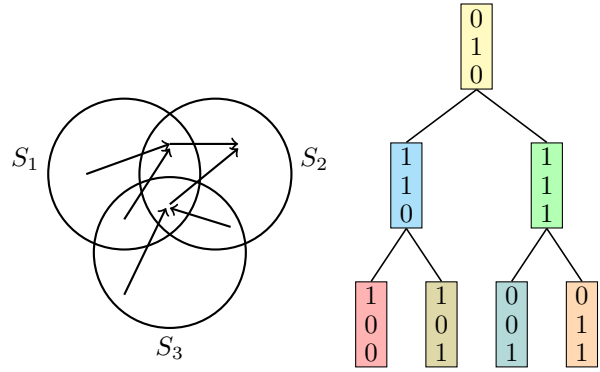
It is important to note that both sides of the equality in Lemma~\ref{lem:automatonlanguage} use essentially the same number of bits in their description, since the set $M(L_1,\ldots,L_m)$ can be described by the automaton with propositional labels or, if preferred, an equivalent regular expression. Thus, the complexity of the non-emptiness problem for both sets is the same.

% polynomial reduction from one to the other

In the next sections, we make use of this decomposition to devise a decision procedure for symbolic tree automata, which, will refine the existing computational complexity results for the corresponding satisfiability problem.

\section{Decision Procedure for Non-Emptiness}
\label{section:decision}
\begin{definition}
The non-emptiness problem for a symbolic tree automaton $M$ is the problem of determining whether $\mathcal{L}(M) \neq \emptyset$.
\end{definition}

By Lemma~\ref{lem:automatonlanguage}, checking non-emptiness of the language of a symbolic finite automaton reduces to checking whether the following formula is true:
\begin{equation} \label{eq:table-power}
\begin{split}
\exists &S_1, \ldots, S_k. \exists d. \bigwedge_{i = 1}^k S_i = \Set{ n \in \{0,1\}^* | \phi_i(d(n)) } \land \\ &\exists \tau \in M(L_1,\ldots,L_m). \bigwedge_{i = 1}^k S_i = \Set{ n \in \{0,1\}^* | \tau_i(n) }
\end{split}
\end{equation}

To establish the complexity of deciding formulae of the form~(\ref{eq:table-power}), we will have to analyse further the set $M(L_1,\ldots,L_m)$. Each tree table $\tau$ in $M(L_1,\ldots,L_m)$ corresponds to a \textit{symbolic table} $s$ whose entries are the propositional formulae that the bit-strings of $\tau$ satisfy. More generally, these symbolic tables are generated by the symbolic automaton obtained by replacing the predicates of the symbolic automaton by propositional formulae. The set of symbolic tables accepted by the automaton $M$ is a regular tree set and will be denoted by $M_S(L_1,\ldots,L_m)$. 

\begin{example}
The automaton in Example~\ref{ex:bdd-automaton} corresponds, according to Example~\ref{ex:prop}, to the following symbolic automaton: 
\begin{figure}[H]
\centering
\begin{tikzpicture}[
  arrows={[scale=.75]},
  every edge quotes/.append style={inner sep=+.15em, node font=\footnotesize}]
\node (q1) [state, accepting] {$q$};
\node (bb)[above right=0mm and 1cm of q1] {$\lnot S_i \lor S_j$};
\path[thick, in=180, at end]
  (q1) edge[out=45, "" above left] ([yshift= .75ex]bb.west)
       edge[out=30, "" below left] ([yshift=-.75ex]bb.west)
  (bb) edge[-Latex, out=0, in=-15] (q1);
\end{tikzpicture}
\end{figure}
% The symbolic tree tables generated by this automaton are of the form $((S_1 \land S_2)(S_1 \land S_2))^*$. The corresponding tables would be of the form $((1,1)(1,1))^*$.
\end{example}

To find the computational complexity of deciding formulae of the form~(\ref{eq:table-power}), consider first the case where the propositional formulae $L_1,\ldots,L_m$ for the automaton $M$ denote disjoint Venn regions. In such case, checking the satisfiability of formula~(\ref{eq:table-power}) reduces to determining whether there exists a symbolic tree table $s$ such that whenever the number of times a certain propositional letter occurs is non-zero, the corresponding Venn region interpreted according to (\ref{eq:1}) has a satisfiable defining formula. From this observation follows that our decision procedure will need to compute the so-called Parikh image of the regular tree language $M_S(L_1,\ldots,L_m)$. 

\begin{definition}[Parikh Image] \label{def:parikh} ~\\
The Parikh image of $M_S(L_1,\ldots,L_m)$ is the set 
\[
\parikh(M_S(L_1,\ldots,L_m)) = \{ (|s|_{L_1}, \ldots, |s|_{L_m}) | s \in M_S(L_1,\ldots,L_m) \} 
\]
where $|s|_{L_i}$ denotes the number of occurrences of the propositional formula $L_i$ in the symbolic tree table $s$. 
\end{definition}

To compute the Parikh image of a regular tree language, we will use a key observation by Klaedtke and Rue{\ss} \cite{klaedtke_parikh_2002} that allows to reduce the problem to that of computing the Parikh image of a context-free grammar.

First, note that the tree language $M_S(L_1,\ldots,L_m)$ is given by a non-deterministic bottom-up tree automaton (this follows from Definition~\ref{def:treeautomata}). However, the construction of Klaedtke and Rue{\ss} is given by non-deterministic top-down tree automata.

% TODO: put in the same notation as the definition of symbolic tree automata
\begin{definition}
A non-deterministic top-down tree automaton is a tuple $\mathcal{A} = \left(Q, \Sigma, \delta, q_0, F\right)$, where 
\begin{itemize}
\item[-] $Q$ is a finite set of states.
\item[-] $\Sigma$ is an alphabet. 
\item[-] $\delta: Q \times \Sigma \rightarrow \mathcal{P}(Q \times Q)$ is the transition function. 
\item[-] $q_0$ is the initial state. 
\item[-] $F \subseteq Q$ is the set of final states. 
\end{itemize}

Associated with $\mathcal{A}$ is the function $\overline{\delta}: \Sigma^{\#} \to S$ defined by $\overline{\delta}(\Lambda) = \Lambda$ and 
\[
\overline{\delta}(\sigma[\tau,\tau']) = \delta(q_0,\sigma)
\]

A run $\varrho$ of $\mathcal{A}$ on $t \in T_{\Sigma}$ is a $Q$-labelled tree $\varrho$ with $\operatorname{dom}(\varrho)=\{\lambda\} \cup\{u b \mid u \in \operatorname{dom}(t), b \in\{0,1\}\}$ such that $\varrho(u)=q_0$, and $(\varrho(u 0), \varrho(u 1)) \in \delta(\varrho(u), t(u))$, for all $u \in \operatorname{dom}(t)$.  

$\varrho$ is accepting if all leaves of $\varrho$ are labeled with states in $F$, i.e. $\varrho(u) \in F$, for all $u \in \operatorname{dom}(\varrho) \backslash \operatorname{dom}(t)$. 

A tree $t$ is recognized by $\mathcal{A}$ if there is an accepting run of $\mathcal{A}$ on $t$. 

$T(\mathcal{A})$ denotes the set of trees that are recognized by $\mathcal{A}$.
\end{definition}

Fortunately, it is easy to convert from non-deterministic top-down to non-deterministic bottom-up tree automata.

\begin{proposition}[{\cite[Theorem~1.6.1]{comon_tree_2008}}]
The class of languages accepted by top-down NFTAs is precisely the class of languages accepted by bottom-up NFTAs. Given a top-down (bottom-up) NFTA one can compute a bottom-up (top-down) NFTA in linear time in the number of edges and states of the input. 
\end{proposition}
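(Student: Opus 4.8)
The plan is to prove the language-equivalence in both directions by exhibiting a transition-reversing construction and then arguing that it preserves the set of accepting \emph{runs}, from which the linear-time bound is immediate. The central observation is that a run is nothing but a $Q$-labelling $\varrho$ of the input tree $t$ subject to \emph{local} constraints relating the label of each node to the labels of its children via the symbol read at that node, together with two \emph{global} boundary conditions, one at the root and one at the leaves. The bottom-up and top-down models impose exactly the same local constraints; they differ only in which boundary is ``initial'' and which is ``accepting''. Hence converting between them amounts to reversing each transition and swapping the two boundary conditions, leaving the set of valid labellings — and therefore the recognised language — unchanged.

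Concretely, I would start from a bottom-up NFTA whose transitions have the form $(q_1,q_2)\xrightarrow{\sigma} q_3$ (children in states $q_1,q_2$ under symbol $\sigma$ force the parent into $q_3$), whose leaves are accepted at the states of a set $F_{\text{leaf}}$, and whose whole-tree acceptance is witnessed by the root carrying a designated state $q_{\text{root}}$. I define the top-down automaton $\mathcal{A} = (Q,\Sigma,\delta,q_0,F)$ by setting $q_0 := q_{\text{root}}$, by putting $(q_1,q_2)\in\delta(q_3,\sigma)$ for every bottom-up transition $(q_1,q_2)\xrightarrow{\sigma} q_3$, and by taking $F := F_{\text{leaf}}$. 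The reverse construction is the exact mirror image. In both directions each transition of the source maps to a single transition of the target and the boundary sets are copied verbatim, so the output is produced by one linear pass over the edges and states of the input, giving the claimed linear time bound.

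It then remains to verify that the two automata accept the same trees, which I would do at the level of runs rather than through the auxiliary function $\overline{\delta}$. Fix a tree $t$ and a candidate labelling $\varrho$. By construction, the top-down local condition $(\varrho(u0),\varrho(u1))\in\delta(\varrho(u),t(u))$ holds at a node $u$ if and only if the bottom-up transition $(\varrho(u0),\varrho(u1))\xrightarrow{t(u)}\varrho(u)$ is present, i.e. precisely when the corresponding bottom-up local constraint holds at $u$. The top-down root condition $\varrho(\lambda)=q_0$ is the bottom-up root-acceptance condition $\varrho(\lambda)=q_{\text{root}}$, and the top-down leaf condition $\varrho(u)\in F$ at the phantom children of leaves is the bottom-up leaf-acceptance condition $\varrho(u)\in F_{\text{leaf}}$. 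Since every defining clause of an accepting top-down run coincides with one of an accepting bottom-up run and vice versa, a labelling is an accepting run of one automaton exactly when it is an accepting run of the other, so the sets of accepted trees are equal.

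The step I expect to require the most care is not the local equivalence, which is symmetric by design, but the correct treatment of the boundary conditions in the present binary-tree formalism: matching the bottom-up acceptance of the empty tree $\Lambda$ against the top-down condition on the phantom leaf children, and keeping the reversal faithful to the run-based semantics rather than to the only partially specified function $\overline{\delta}$ appearing in the definition. A secondary point worth flagging is that nondeterminism is essential — the construction yields a nondeterministic automaton even from a deterministic input, and the equivalence fails for the \emph{deterministic} top-down model — so the argument must rely throughout on $\delta$ returning a \emph{set} of child-state pairs.
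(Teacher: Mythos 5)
Your proposal is correct: the paper itself offers no proof of this proposition (it imports the result by citation to Theorem~1.6.1 of the Comon et al.\ reference), and your transition-reversal construction with swapped root/leaf boundary conditions is exactly the standard argument establishing that theorem, correctly adapted here to the formalism with a single root state $q_0$ and leaf-acceptance set $F$, and with the right caveat that the equivalence relies on nondeterminism. The run-level correspondence you give, including the matching of the phantom leaf children against bottom-up acceptance of the empty tree $\Lambda$, is sound, and the linear-time bound follows from the one-pass, transition-by-transition nature of the construction as you state.
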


Second, we use the observation of Klaedtke and Rue{\ss} \cite[Lemma~17]{klaedtke_parikh_2002} to compute a context-free grammar with the same Parikh image.

\begin{lemma}[\cite{klaedtke_parikh_2002}] \label{lem:grammar}
For any non-deterministic top-down tree automaton $\mathcal{A}$ one can compute in linear time a context-free grammar $G_{\mathcal{A}}$ expressing the trees accepted by $\mathcal{A}$ as words obtained through the in-order traversal of the trees. As a consequence $\parikh(\mathcal{A}) = \parikh(G_{\mathcal{A}})$. 
\end{lemma}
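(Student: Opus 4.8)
The plan is to construct the context-free grammar $G_{\mathcal{A}}$ directly from the transition structure of the top-down tree automaton $\mathcal{A} = (Q, \Sigma, \delta, q_0, F)$, using the states of $\mathcal{A}$ as nonterminals. First I would fix the correspondence between trees and words: the in-order traversal of a binary tree $\sigma[\tau,\tau']$ is the concatenation of the in-order traversal of the left subtree $\tau$, the root symbol $\sigma$, and the in-order traversal of the right subtree $\tau'$, with the empty tree $\Lambda$ traversing to the empty word. The terminal alphabet of $G_{\mathcal{A}}$ is $\Sigma$ itself, so that a symbol occurring at a node contributes exactly one occurrence of that terminal to the word, which is precisely what is needed to preserve the Parikh image.

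The grammar is then defined as follows. The set of nonterminals is $Q$, the start symbol is $q_0$, and for each transition $(q_1, q_2) \in \delta(q, \sigma)$ I introduce the production
\[
q \to q_1\, \sigma\, q_2,
\]
reflecting the in-order layout (left child, root, right child). For the base case, the acceptance condition requires that the children of a leaf node are labelled with final states; equivalently, whenever a state $q$ may terminate a run, I add the production $q \to \varepsilon$ exactly when $q \in F$. This yields a grammar whose size is linear in the number of states and edges of $\mathcal{A}$, and it can plainly be computed in linear time, establishing the first assertion of the lemma. The second step is to prove the language correspondence by a straightforward induction on the structure of the tree (equivalently on the height of the derivation): a tree $t$ is accepted by $\mathcal{A}$ starting from state $q$ if and only if the in-order traversal of $t$ is derivable from the nonterminal $q$ in $G_{\mathcal{A}}$. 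The forward direction reads off a derivation from an accepting run, and the backward direction reconstructs an accepting run from a derivation, using the one-to-one matching between productions $q \to q_1\sigma q_2$ and transitions $(q_1,q_2) \in \delta(q,\sigma)$.

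Once the language correspondence is in place, the Parikh image identity $\parikh(\mathcal{A}) = \parikh(G_{\mathcal{A}})$ follows immediately: each node of a recognized tree $t$ labelled $\sigma$ contributes exactly one $\sigma$-symbol to its in-order word, so the symbol-occurrence multiset of $t$ equals the terminal-occurrence multiset of its in-order traversal. Since the map $t \mapsto \text{in-order}(t)$ is a bijection between $T(\mathcal{A})$ and the words derivable from $q_0$ that preserves these multisets, the two Parikh images coincide.

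The step I expect to require the most care is the precise handling of leaves and the empty-tree convention, since the automaton's acceptance condition is phrased in terms of the states labelling positions \emph{below} the leaves of $t$ (the elements of $\operatorname{dom}(\varrho) \setminus \operatorname{dom}(t)$), whereas the grammar must terminate a derivation at exactly those states. Matching these two formulations — ensuring that the $\varepsilon$-productions are introduced for precisely the states that legitimately end a run, and that no spurious words enter the generated language — is where the bookkeeping is subtle, even though each individual check is routine. Everything else reduces to the inductive argument sketched above.
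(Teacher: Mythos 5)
Your construction coincides with the paper's in all essentials: the nonterminals are the states, the start symbol is the initial state, and each transition $(q_1,q_2)\in\delta(q,\sigma)$ becomes the in-order production $q \to q_1\,\sigma\,q_2$; linearity of the grammar size and preservation of the Parikh image then follow exactly as you argue. The single point of divergence is the termination rule. The paper keeps the grammar $\varepsilon$-free: it adds the terminal production $q \to b$ whenever $(F\times F)\cap\delta(q,b)\neq\emptyset$, i.e.\ a leaf labelled $b$ is generated in one step when some transition sends both virtual children of that leaf into final states. You instead add $q \to \varepsilon$ for every $q \in F$, so a leaf arises as $q \to q_1\,\sigma\,q_2$ followed by $q_1\to\varepsilon$, $q_2\to\varepsilon$. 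On non-empty full binary trees the two variants generate exactly the same words (the side conditions match); your variant additionally produces $\varepsilon$ when $q_0\in F$ (the empty tree) and copes with nodes having exactly one child, which under the paper's definition of trees (prefix-closed but not necessarily full domains) are legitimately accepted yet unreachable in an $\varepsilon$-free grammar --- so the leaf-handling subtlety you flag as the delicate step is precisely the right concern, and your resolution is, if anything, more faithful to the paper's tree model, while the extra $\varepsilon$-productions cost nothing downstream because Lemma~\ref{lem:verma} applies to arbitrary context-free grammars. One small slip: $t \mapsto \mathrm{inorder}(t)$ is not injective, hence not a bijection onto $L(G_{\mathcal{A}})$; but the Parikh-image conclusion only needs the language equality $\mathrm{Inorder}(T(\mathcal{A})) = L(G_{\mathcal{A}})$ together with the fact that in-order traversal preserves symbol counts, both of which your induction does establish.
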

\begin{proof}
Let $\mathcal{A}=\left(Q, \Gamma, \delta, q_{I}, F\right)$ be a top-down tree automaton. We define a context-free grammar $G = \langle V, \Sigma, R, S \rangle$ that generates the words obtained by traversing the trees recognized by $\mathcal{A}$ in infix order as follows:

\begin{itemize}
    \item $V = Q$ is the set of non-terminal symbols. 
    \item $\Sigma$ is the set of terminal symbols.
    \item There are two kinds of derivation rules: 

    \begin{itemize}
    \item For each $\left(q, q^{\prime}\right) \in \delta(p, b)$, we have the rule $p \rightarrow q b q^{\prime}$.
    \item If $(F \times F) \cap \delta(q, b) \neq \emptyset$ then we have the rule $q \rightarrow b$. 
    \end{itemize}

    \item $S = q_{I}$ is the start symbol of $G$.
\end{itemize}

It is immediate from the definition that $Inorder(L(\mathcal{A})) = L(G)$ and that the size of the grammar is equal to that of the automaton. Since the Parikh image is invariant under permutation of the labels, it follows that $\parikh(\mathcal{A}) = \parikh(G_{\mathcal{A}})$. 
\end{proof}

The second key observation, by Verma, Seidl and Schwentick, is that the Parikh image of a context-free grammar can be described by a linear-sized existential Presburger arithmetic formula.

\begin{lemma}[\cite{verma_complexity_2005}] 
\label{lem:verma}
Given a context-free grammar G, one can compute an existential Presburger formula $\phi_G$ for the Parikh image of $L(G)$ in linear time.
\end{lemma}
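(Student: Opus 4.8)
The plan is to build $\phi_G$ directly from the derivation-tree structure of $G$, characterising Parikh vectors by flow equations augmented with a connectivity witness. First I would introduce one non-negative integer variable $x_p$ for each production $p$ of $G$, with the intended meaning that $x_p$ counts how many times $p$ is applied in a derivation, i.e. the number of internal nodes of the derivation tree carrying the label $p$, together with one variable $y_a$ per terminal $a$ standing for the coordinate of the Parikh vector associated with $a$. The terminal-counting constraints are then linear: for each terminal $a$ one writes $y_a = \sum_p c_{p,a}\, x_p$, where $c_{p,a}$ is the constant number of occurrences of $a$ on the right-hand side of $p$. Next I would impose flow-conservation constraints on the non-terminals. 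For each non-terminal $A$ different from the start symbol $S$, the number of applied productions that rewrite $A$ (those with $A$ on the left) must equal the number of right-hand-side occurrences of $A$ among applied productions, counted with multiplicity; for $A = S$ the number rewriting $S$ exceeds its right-hand-side occurrences by exactly one, reflecting that $S$ labels the root and is produced by no rule. Each such identity is a linear combination of the $x_p$ with constant coefficients, there is one per non-terminal, and together they are exactly the counting identities satisfied by every genuine derivation tree.

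The flow equations alone are necessary but not sufficient: a multiset of productions can satisfy them while decomposing into a legitimate tree rooted at $S$ plus disjoint ``floating'' cycles of productions that contribute spurious terminal counts. The essential step is therefore to add a connectivity constraint forcing the chosen productions to hang together as a single tree reachable from $S$. I would encode this by existentially guessing, for each non-terminal $A$, a level $d_A \in \mathbb{N}$, fixing $d_S = 0$ and requiring that every non-terminal $A$ that is actually used, meaning it occurs on the right-hand side of some production with $x_p > 0$, admit a production $p$ with $x_p > 0$ whose left-hand side $B$ satisfies $d_B < d_A$ and whose right-hand side contains $A$. The strict decrease of $d$ along used productions guarantees that every used non-terminal is reachable from $S$ through productions of positive count, ruling out the floating cycles. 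Because $G$ has finitely many productions, the clause ``there exists such a $p$'' is a finite disjunction and hence stays within existential Presburger arithmetic.

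Finally I would prove that $\bar y \in \parikh(L(G))$ if and only if the conjunction of the three blocks, with the $x_p$ and $d_A$ existentially quantified, is satisfiable. The forward direction is routine: from an actual derivation tree one sets $x_p$ to the relevant node count and $d_A$ to the depth of the first occurrence of $A$. The backward direction is the main obstacle and carries the real content of the lemma: given any solution of the flow plus reachability constraints one must reassemble the counted productions into a bona fide derivation tree whose Parikh vector is $\bar y$. I would argue this by induction on $\sum_p x_p$, peeling off a production whose left-hand side has minimal level and invoking flow conservation to keep the remaining multiset balanced, with the level ordering ensuring the recursion terminates at $S$; the key point is that reachability forbids exactly the balanced-but-disconnected configurations that would otherwise break the reconstruction. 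Collecting the blocks, $\phi_G$ has one equation per terminal and per non-terminal and one disjunctive clause per non-terminal, each of size proportional to the productions it mentions; hence $|\phi_G| = O(|G|)$ and the whole formula is produced by a single linear-time pass over $G$, as claimed.
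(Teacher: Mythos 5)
The paper does not actually prove this lemma: it imports it from \cite{verma_complexity_2005}, and your construction is essentially the one in that reference (production-count variables, flow equations per nonterminal, and a level-based connectivity certificate), so the architecture and the linear-size accounting are right. There are, however, two problems. The first is a fixable bug: the connectivity constraint must exempt the start symbol $S$ (or you must first normalize the grammar by adding a fresh start symbol $S' \to S$). As written, whenever $S$ occurs on the right-hand side of a used production, $S$ is ``used'' and your constraint demands a used production containing $S$ on the right whose left-hand side $B$ satisfies $d_B < d_S = 0$, which is impossible over $\mathbb{N}$. Concretely, for the grammar $p_1 \colon S \to C$, $\; p_2 \colon C \to aS$, $\; p_3 \colon S \to b$, the word $ab$ forces $x_{p_1} = x_{p_2} = x_{p_3} = 1$, and then your formula is unsatisfiable, so it misses a vector that lies in the Parikh image of $L(G)$.

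The genuine gap is the backward direction, which you correctly identify as carrying the real content but then dispatch with an induction that does not work as stated. ``Peeling off a production whose left-hand side has minimal level'' fails on two counts. First, after removing a production with left-hand side $S$, the remaining multiset no longer satisfies the flow equations relative to the root $S$, so the induction hypothesis does not even apply to the smaller instance; the statement must be generalized to derivation forests rooted at a frontier (a multiset of pending nonterminals), with reachability measured from that frontier. Second, and decisively, the minimal-level rule does not determine which production to peel, and the wrong choice is fatal: in the grammar above with $x_{p_1} = x_{p_2} = x_{p_3} = 1$, both $p_1$ and $p_3$ have left-hand side $S$ at level $0$; peeling $p_3$ completes the tree $S \Rightarrow b$, while the leftover $\{p_1, p_2\}$ remains perfectly flow-balanced (a ``floating cycle'' through $S$ and $C$) yet can never be attached to anything --- exactly the balanced-but-disconnected configuration you set out to exclude, arising even though flow conservation is maintained throughout. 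A correct argument must use the reachability certificate to steer the choice at every step (for instance, maintain as an invariant that every nonterminal still used by the remaining multiset is reachable from the current frontier, and give priority to witness productions), or follow a different induction as in \cite{verma_complexity_2005}. Until this is supplied, the equivalence between your formula and the Parikh image is unproven in the direction where the lemma's difficulty actually lies.
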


In summary, 

\begin{lemma}
\label{lem:parikhlinear} 
The set $\parikh(M_S(L_1,\ldots,L_n))$ is definable by an existential Presburger formula $\rho$ of size $O(|M|)$ where $|M|$ is the number of symbols used to describe the automaton $M$.
\end{lemma}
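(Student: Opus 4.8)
The plan is to chain together the three technical results that have just been established, tracking the formula size at each step to confirm the claimed $O(|M|)$ bound. The statement asserts that $\parikh(M_S(L_1,\ldots,L_m))$ is definable by an existential Presburger formula whose size is linear in the description of $M$, so the proof is essentially a composition argument: start from the automaton $M$, pass to its symbolic version $M_S(L_1,\ldots,L_m)$, convert that to the appropriate tree-automaton format, extract a context-free grammar, and finally apply the Verma–Seidl–Schwentick theorem to obtain the Presburger formula.

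Concretely, I would proceed as follows. First, recall that $M_S(L_1,\ldots,L_m)$ is the regular tree language accepted by the symbolic automaton obtained from $M$ by relabelling transitions with propositional formulae $L_1,\ldots,L_m$; by Definition~\ref{def:treeautomata} this is presented as a non-deterministic \emph{bottom-up} tree automaton whose size is $O(|M|)$, since relabelling does not change the number of states or transitions. Second, apply Proposition (the \cite{comon_tree_2008} conversion) to produce an equivalent non-deterministic \emph{top-down} tree automaton $\mathcal{A}$ in time linear in the number of edges and states, hence of size $O(|M|)$. Third, invoke Lemma~\ref{lem:grammar} to compute in linear time a context-free grammar $G_{\mathcal{A}}$ with $\parikh(\mathcal{A}) = \parikh(G_{\mathcal{A}})$ and $|G_{\mathcal{A}}| = O(|\mathcal{A}|) = O(|M|)$; note that since the symbolic table $s$ is exactly the tree traversed in infix order, the Parikh vector $(|s|_{L_1},\ldots,|s|_{L_m})$ is preserved, so $\parikh(M_S(L_1,\ldots,L_m)) = \parikh(G_{\mathcal{A}})$. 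Fourth, apply Lemma~\ref{lem:verma} to $G_{\mathcal{A}}$ to obtain an existential Presburger formula $\rho := \phi_{G_{\mathcal{A}}}$ for $\parikh(L(G_{\mathcal{A}}))$ in linear time, so that $|\rho| = O(|G_{\mathcal{A}}|) = O(|M|)$.

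The only genuine content beyond assembling the cited lemmas is the bookkeeping that each transformation preserves the Parikh image and incurs at most a constant-factor (linear) size blow-up, so that the composition remains linear; since every individual step runs in linear time and produces linear-size output, the total is $O(|M|)$. I do not expect any serious obstacle here, as all three ingredients are stated precisely above with the linearity claims built in. The one point requiring a sentence of care is the identification of the alphabet: the terminal symbols of $G_{\mathcal{A}}$ are the propositional labels $L_1,\ldots,L_m$ (equivalently the Venn regions), so that counting terminal occurrences in a derived word coincides with the quantities $|s|_{L_i}$ of Definition~\ref{def:parikh}. With that alignment made explicit, the chain of equalities $\parikh(M_S(L_1,\ldots,L_m)) = \parikh(\mathcal{A}) = \parikh(G_{\mathcal{A}}) = \{\,v : \rho(v)\,\}$ closes the argument.
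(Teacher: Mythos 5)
Your proposal is correct and follows exactly the paper's route: the paper presents this lemma as a summary (``In summary,'') of the chain you assemble, namely the bottom-up-to-top-down conversion from \cite{comon_tree_2008}, the Klaedtke--Rue{\ss} grammar construction of Lemma~\ref{lem:grammar}, and the Verma--Seidl--Schwentick result of Lemma~\ref{lem:verma}, each incurring only linear blow-up. Your explicit bookkeeping of sizes and of the alphabet identification (terminals $=$ propositional labels $L_i$) merely spells out what the paper leaves implicit.
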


In the more general case, when propositional letters denote overlapping Venn regions, a partitioning argument is required. This is formalised in Theorem~\ref{thm:tableelimination}. First, we fix some notation. We set $p_{\beta} := \bigcap_{i = 1}^k S_i^{\beta_i}$ where $\beta \in \{0,1\}^k$, $p_L := \bigcup\limits_{\beta \models L} p_{\beta}$ where $L$ is a propositional formula and $\models$ is the propositional satisfaction relation that is true if and only if the assignment of the values in $\beta$ to the free variables in $L$ makes the formula $L$ true. When using the interpretation of sets of the form~(\ref{eq:1}), the formula defining the Venn region $p_{\beta}$ will be denoted by $\varphi^{\beta}(d) := \bigwedge_{i = 1}^k  \varphi^{\beta(i)}_i(d)$. We write $S_1 \dot{\cup} S_2$ to denote 
the set $S_1 \cup S_2$ where we want to emphasise that $S_1 \cap S_2 = \emptyset$. Finally, we use the notation $[n] := \set{1,\ldots,n}$ to refer to the first $n$ natural numbers.

We give next the technical statement of the main theorem of this section. The reader should refer to the explanations following the statement for the intuition behind it.

\begin{theorem} \label{thm:tableelimination}
Formula~(\ref{eq:table-power}) is equivalent to the formula
\begin{equation} \label{eq:elimprop}
\begin{split}
\exists s \in& [m]. \sigma: [s] \hookrightarrow [m].\exists \beta_{1}, \ldots,\beta_{s} \in \{0,1\}^k. \bigwedge_{j = 1}^s \exists d. \phi^{\beta_{j}}(d) \land \\ \exists k_1,&\ldots,k_m. \exists S_1,\ldots, S_k, P_1,\ldots,P_s. \\ & 
  \rho(k_1,\ldots, k_m) \land \bigwedge_{i = 1}^s P_i \subseteq p_{L_{\sigma(i)}} \land  \cup_{i = 1}^m p_{L_i} = \dot{\cup}_{i = 1}^s P_i \land \\ 
  &\bigwedge_{i = 1}^s |P_i| = k_{\sigma(i)} \land \bigwedge_{i = 1}^s p_{\beta_i} \cap P_{i} \neq \emptyset
\end{split}
\end{equation}
where $\sigma$ is an injection from $\set{1,\ldots,s}$ to $\set{1,\ldots,m}$, $\rho$ is the arithmetic expression in Lemma~\ref{lem:parikhlinear} and $|\cdot|$ denotes the cardinality of the argument set expression.
\end{theorem}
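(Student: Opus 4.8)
The plan is to prove the stated equivalence by two implications, reading formula~(\ref{eq:table-power}) through Lemma~\ref{lem:automatonlanguage} as the assertion that $\mathcal{L}(M) \neq \emptyset$, and reading formula~(\ref{eq:elimprop}) as a guarded combination of (i) the bounded guesses $s \in [m]$ and $\sigma \colon [s] \hookrightarrow [m]$, (ii) the finitely many first-order satisfiability queries $\exists d.\, \phi^{\beta_j}(d)$ over the effective Boolean algebra, (iii) the existential Presburger constraint $\rho$ supplied by Lemma~\ref{lem:parikhlinear}, and (iv) a cardinality-constrained Boolean-algebra part over the set variables $S_1,\ldots,S_k,P_1,\ldots,P_s$. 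Throughout I would interpret the $S_i$ by the set interpretation~(\ref{eq:1}), so that $p_{\beta}$ is the set of tree positions sitting in Venn region $\beta$ and $p_{L_i}$ the set of positions whose region satisfies the guard $L_i$.

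For the forward implication I would take a witnessing tree $d \in \mathcal{D}^{\#}$ and an accepting tree table $\tau \in M(L_1,\ldots,L_m)$ from Lemma~\ref{lem:automatonlanguage} and read off the existential witnesses of~(\ref{eq:elimprop}) directly from the accepting run. I would let the image of $\sigma$ be the set of guards used with nonzero frequency in the symbolic table underlying $\tau$, put $s = \lvert \operatorname{im}\sigma \rvert$, and set $k_j = |s|_{L_j}$; these counts lie in $\parikh(M_S(L_1,\ldots,L_m))$, so $\rho(k_1,\ldots,k_m)$ holds by Lemma~\ref{lem:parikhlinear}. Taking $P_i$ to be the set of positions whose transition carries the guard $L_{\sigma(i)}$ makes the $P_i$ pairwise disjoint with union $\cup_{i=1}^m p_{L_i}$ (every position satisfies its own guard, hence lies in some $p_{L_j}$), yields $P_i \subseteq p_{L_{\sigma(i)}}$ and $\lvert P_i \rvert = k_{\sigma(i)}$, and choosing $\beta_i := \tau(n)$ for some $n \in P_i$ gives a region that is nonempty — witnessed by $d(n)$ — and meets $P_i$, discharging the last two conjuncts.

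The backward implication is, as I expect, the crux. Starting from witnesses satisfying~(\ref{eq:elimprop}), I would first use $\rho(k_1,\ldots,k_m)$ together with Lemmas~\ref{lem:grammar} and~\ref{lem:parikhlinear} to materialise an actual accepting symbolic table $s^{*}$ whose guard frequencies are exactly the $k_j$. The heart of the argument is the deduction that $\beta_i \models L_{\sigma(i)}$: the conjuncts $P_i \subseteq p_{L_{\sigma(i)}}$ and $p_{\beta_i} \cap P_i \neq \emptyset$ force $p_{\beta_i} \cap p_{L_{\sigma(i)}} \neq \emptyset$, and because $p_{\beta_i}$ is a single elementary Venn region this sharpens to $p_{\beta_i} \subseteq p_{L_{\sigma(i)}}$. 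Consequently I can decorate $s^{*}$ into a concrete tree by assigning to every position carrying guard $L_{\sigma(i)}$ the single nonempty region $\beta_i$, and the query $\exists d.\, \phi^{\beta_i}(d)$ lets me pick an actual character of $\mathcal{D}$ there; by Lemma~\ref{lem:automatonlanguage} the resulting tree lies in $\mathcal{L}(M)$.

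The main obstacle, and the exact spot where the overlap of the Venn regions is paid for, is ensuring that $s^{*}$ only ever uses guards $L_{\sigma(i)}$ for which a nonempty region was verified — equivalently, that the guessed injection $\sigma$ covers the support of $(k_1,\ldots,k_m)$. I would derive this from the cardinality identity $\sum_{i=1}^s k_{\sigma(i)} = \lvert \dot{\cup}_{i=1}^s P_i \rvert = \lvert \cup_{i=1}^m p_{L_i} \rvert$, which pins the number of realised positions to the positions lying in $\cup_i p_{L_i}$, so that no frequency may escape $\operatorname{im}\sigma$. The complementary soundness point is that it is legitimate to collapse an entire guard class $P_i$ onto one witness region: since a transition only tests which propositional guard holds and not the precise region, realising all $k_{\sigma(i)}$ occurrences of $L_{\sigma(i)}$ by the same $\beta_i$ is sound, which is precisely why the statement can afford the weak conjunct $p_{\beta_i} \cap P_i \neq \emptyset$ in place of a containment and still remain equivalent in both directions. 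Finally I would note that all guesses are bounded by $m$ and $2^{k}$ and the Boolean-algebra part is expressible in \qfbapa, keeping the reduction effective.
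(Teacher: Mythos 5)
Your proof follows the same route as the paper's: your forward direction (defining $\sigma$ on the support of the guard frequencies, $P_i$ as the positions carrying guard $L_{\sigma(i)}$, and $\beta_i$ as the bit-string of some position in $P_i$) is the paper's construction essentially verbatim, and your backward direction refines the paper's plan by making explicit two steps the paper leaves implicit: that $p_{\beta_i}\cap P_i\neq\emptyset$ together with $P_i\subseteq p_{L_{\sigma(i)}}$ forces $\beta_i\models L_{\sigma(i)}$ (correct, by disjointness of elementary Venn regions), and that one may soundly collapse all occurrences of $L_{\sigma(i)}$ onto the single region $\beta_i$. However, the step that you yourself single out as the main obstacle --- that the injection $\sigma$ must cover the support of $(k_1,\ldots,k_m)$, so that the materialised table $s^*$ uses no guard outside $\{L_{\sigma(1)},\ldots,L_{\sigma(s)}\}$ --- is not established by your argument. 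The identity $\sum_{i=1}^s k_{\sigma(i)}=|\dot{\cup}_{i=1}^s P_i|=|\cup_{j=1}^m p_{L_j}|$ does hold, but it mentions only the coordinates $k_{\sigma(i)}$; a coordinate $k_j$ with $j\notin\sigma([s])$ occurs in formula~(\ref{eq:elimprop}) only inside $\rho$, so no conjunct ties it to any set cardinality, and the identity cannot force it to vanish.

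This is a genuine gap, not a presentational one. Take $k=2$ with atomic predicates $\phi_1(x)\equiv x>0$ and $\phi_2(x)\equiv x<x$ over the integers (so $\phi_1$ is satisfiable and $\phi_2$ is not), guards $L_1=S_1$ and $L_2=S_2$, and the automaton with transitions $(q_c,\phi_1,q_\epsilon,q_0)$ and $(q_\epsilon,\phi_2,q_\epsilon,q_c)$, final states $F=\{q_\epsilon\}$ and initial state $q_0$: its unique symbolic table has one $L_1$-node and one $L_2$-node, so $\parikh(M_S(L_1,L_2))=\{(1,1)\}$, while $\mathcal{L}(M)=\emptyset$, i.e.\ formula~(\ref{eq:table-power}) is false. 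Yet formula~(\ref{eq:elimprop}) is true: take $s=1$, $\sigma(1)=1$, $\beta_1=(1,0)$, $k_1=k_2=1$, $S_1=P_1=\{n_0\}$ for any position $n_0$, and $S_2=\emptyset$; every conjunct, including your cardinality identity, is satisfied, with the frequency $k_2=1$ escaping $\sigma([s])$. Consequently no argument can close this gap for the statement as written --- the two formulas are simply not equivalent --- and the repair has to go into the statement itself, e.g.\ adding to (\ref{eq:elimprop}) the conjunct $\bigwedge_{j\notin\sigma([s])}k_j=0$, after which both your construction and the paper's go through. In fairness, the paper's own backward proof has exactly the same flaw, since it silently assumes that only the letters $L_{\sigma(1)},\ldots,L_{\sigma(s)}$ occur in the table $\overline{s}$ obtained from $\rho$; you at least identified the critical step, but the justification you propose for it does not work.
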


We start by observing that formula~(\ref{eq:elimprop}) has two parts. The first part corresponds to the subterm $\bigwedge_{j = 1}^s \exists d. \varphi^{\beta_{j}}(d)$ and falls within the existential theory of the elements in $\mathcal{D}$, $Th_{\exists^*}(\mathcal{D})$. The second part corresponds to the remaining subterm and falls within the quantifier-free first-order theory of Boolean Algebra with Presburger arithmetic ($\qfbapa$)\cite{kuncak_towards_2007}, which can be viewed as the monadic second order theory $Th_{\exists^*}^{mon}(\langle \mathbb{N}, \subseteq, \sim \rangle)$ where $\sim$ is the equicardinality relation between two sets. 

The second observation is that formula~(\ref{eq:elimprop}) is distilled from a non-deterministic decision procedure for the formulae of the shape (\ref{eq:table-power}). The existentially quantified variables $s, \sigma, \beta_{1}, \ldots,\beta_{s}$ are guessed by the procedure. These guessed values are then used by specialised procedures for $Th_{\exists^*}(\mathcal{D})$ and $Th_{\exists^*}^{mon}(\langle \mathbb{N}, \subseteq, \sim \rangle)$. For the convenience of the reader, we describe here what these values mean (this meaning follows from the proof of the theorem below). The value of $s$ represents the number of Venn regions associated to the formulae $L_1,\ldots,L_m$ that will be non-empty. $\sigma$ indexes these non-empty regions. $\beta_1,\ldots,\beta_s$ are elementary Venn regions contained in the non-empty ones. 

Observe that Theorem~\ref{thm:tableelimination} refines the statement in Lemma~\ref{lem:automatonlanguage}: the satisfiability problem of SFAs is decomposed into the decision problem of the existential fragment of the theory of the input characters and the existential fragment of the monadic second-order theory of the indices.

Finally, it remains to exemplify the situation in which the Venn regions overlap, which justifies the introduction of the partition variables $P_1,\ldots,P_s$ in formula~(\ref{eq:elimprop}).

\begin{example} 
\label{ex:partition}
Consider the situation where $S_1 \land S_2$ and $S_2 \land S_3$ are two propositional formulae labelling the transitions of the symbolic automaton. These formulae correspond to the Venn regions $S_1 \cap S_2$ and $S_2 \cap S_3$, which share the region $S_1 \cap S_2 \cap S_3$. Given a model of $S_1, S_2$ and $S_3$, how do we guarantee that the indices in the region $S_1 \cap S_2 \cap S_3$ are consistent with a run of the automaton? For instance, the automaton may require one element in $S_1 \cap S_2$ and another in $S_2 \cap S_3$. Placing a single index in $S_1 \cap S_2 \cap S_3$ would satisfy the overall cardinality constraints, but not the fact that overall we need to have two elements. Trying to specify these restrictions in the general case would reduce to specifying an exponential number of cardinalities. 
\end{example}

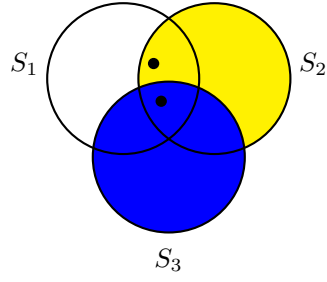
\begin{figure} 
\centering
\begin{tikzpicture}[thick]
\fill[yellow] (1.2,0) circle (1);
\fill[blue] (.6,-1.04) circle (1);

\draw (0,0) circle (1) node[above,shift={(0,1)}] {};
\draw (1.2,0) circle (1) node[above,shift={(0,1)}] {};
\draw (.6,-1.04) circle (1) node[shift={(1.1,-.6)}] {};

\node at (-1.3,.2) {$S_1$};
\node at (2.5,.2) {$S_2$};
\node at (0.6,-2.4) {$S_3$};
\node at (.4,.2) [circle,fill,inner sep=1.5pt]{};
\node at (.5,-.3) [circle,fill,inner sep=1.5pt]{};
\end{tikzpicture}
\caption{A Venn diagram representing the situation discussed in Example~\ref{ex:partition}. To deal with the overlapping regions, it is necessary to decide to which sets do the different indices belong to. In the Figure, the different parts are marked with colours.}
\label{fig:partition}
\end{figure}

We proceed next to the proof of the theorem. 

\begin{proof}[Proof of Theorem~\ref{thm:tableelimination}]
$\Rightarrow)$ If formula~(\ref{eq:table-power}) is satisfiable, then there are sets $S_1,\ldots,S_k$, a word $d$ and a tree table $\tau$ satisfying
\begin{equation*} \label{eq:te1} 
\bigwedge_{i = 1}^k S_i = \Set{ n \in \{0,1\}^* | \phi_i(d) } \land \tau \in M(L_1,\ldots,L_s) \land \bigwedge_{i = 1}^k S_i = \Set{ n \in \{0,1\}^* | t_i(n) }
\end{equation*}

Let $\overline{s} \in M(L_1,\ldots,L_s)$ be the symbolic tree table corresponding to $\tau$ (where the bar is used to distinguish it from variable $s$ appearing in the formula~(\ref{eq:elimprop})). We define $k_i := |\overline{s}|_{L_i}, s = |\Set{ i | k_i \neq 0 }|$, $\sigma$ mapping the indices in $[s]$ to the indices of the terms for which $k_i$ is non-zero and $P_i = \set{ n \in \{0,1\}^* | \overline{s}(n) = L_{\sigma(i)} }$. It will be convenient to work out the following equalities:
\begin{align} \label{eq:te2}
\begin{split}
p_{L_i} &= \bigcup_{\beta \models L_i} \bigcap_{j = 1}^k S_j^{\beta_j} = \bigcup_{\beta \models L_i} \Set{ n \in \{0,1\}^* | \bigwedge_{j = 1}^k t_j^{\beta_j}(n) } \\ &  =  \Set{ n \in \{0,1\}^* | \tau(n) \models L_i } \\
p_{L_i} &= \bigcup_{\beta \models L_i} \bigcap_{j = 1}^k S_j^{\beta_j} = \bigcup_{\beta \models L_i} \Set{ n \in \mathcal{D} | \bigwedge_{j = 1}^k \phi_j^{\beta_j}(d) } \\ & =  \Set{ d \in \mathcal{D} | L_i(\overline{\phi}(d)) }
\end{split}
\end{align}
where $L_i(\overline{\phi}(d(n)))$ is the propositional formula obtained by substituting each set variable $S_i$ by the formulae $\phi_i(d(n))$. We now deduce formula~(\ref{eq:elimprop}):
\begin{itemize}
    \renewcommand\labelitemi{-}
    \item $\rho(k_1,\ldots,k_m)$: from $\overline{s} \in P(L_1,\ldots,L_m)$, we have that  
    \[
    (k_1,\ldots,k_m) \in \parikh(M_S(L_1,\ldots,L_m))
    \]
    and therefore $\rho(k_1,\ldots,k_m)$.
    \item $P_i \subseteq p_{L_{\sigma(i)}}$: since $\overline{s}$ corresponds to $\tau$, for all $n \in \N$ we have $\tau(n) \models \overline{s}(n)$ and the inclusion follows from the definition of $P_i$ and equation~(\ref{eq:te2}).
    \item $|P_i| = k_{\sigma(i)}$: since 
    \[
    |P_i| = \Big|\Set{ n \in \{0,1\}^* | \overline{s}(n) = L_{\sigma(i)} }\Big| = |\overline{s}|_{L_{\sigma(i)}} = k_{\sigma(i)}
    \]
    \item Each pair of sets $P_i, P_j$ with $i < j$ is disjoint: 
    \begin{align*}
    P_i \cap P_j &= \Set{ n \in \{0,1\}^* | \overline{s}(n) = L_{\sigma(i)} } \cap \Set{ n \in \{0,1\}^* | \overline{s}(n) = L_{\sigma(j)} } = \\ &= \Set{ n \in \{0,1\}^* | \overline{s}(n) = L_{\sigma(i)} = L_{\sigma(j)} } = \emptyset
    \end{align*}
    using that the letters $L$ are chosen to be distinct and that $\sigma$ is an injection (so $\sigma(i) \neq \sigma(j)$). 

    \item $p_{L_1} \cup \ldots \cup p_{L_m} = P_1 \dot{\cup} \ldots \dot{\cup} P_s$: since by definition 
    \[
    P_i = \Set{ n \in \{0,1\}^* | \overline{s}(n) = L_{\sigma(i)} }, p_{L_i} = \Set{ n \in \{0,1\}^* | \tau(n) \models L_i }
    \]
    and by definition of $\sigma$ it follows that the only letters that can appear in $\overline{s}$ are $L_{\sigma(1)}, \ldots, L_{\sigma(s)}$. Thus, we have $p_{L_1} \cup \ldots \cup p_{L_m} = [1,|\overline{t}|] = [1,|\overline{s}|] = P_1 \dot{\cup} \ldots \dot{\cup} P_s$. 
    
    \item There exists $\beta_1,\ldots,\beta_s \in \{0,1\}^k$, such that $\bigwedge_{i = 1}^s P_{\beta_i} \cap P_{i} \neq \emptyset$: note that $P_i \neq \emptyset$ by definition of $\sigma$. Thus, there must exist some $\beta_i$ such that $p_{\beta_i} \cap P_{i} \neq \emptyset$. We pick any such $\beta_i$.
    \item $\bigwedge_{j = 1}^s \exists d.  \varphi^{\beta_j}(d)$: follows from $p_{\beta_{j}} \cap P_j \neq \emptyset$ and formula~(\ref{eq:te2}).
\end{itemize}

$\Leftarrow)$ Conversely, if formula~(\ref{eq:elimprop}) is satisfiable, then there is an integer $s \in [n]$, an injection $\sigma: [s] \hookrightarrow [n]$, bit-strings $\beta_{1},\ldots,\beta_{s} \in \{0,1\}^k$, integers $k_1,\ldots,k_m$ and sets $S_1,\ldots,S_k,P_1,\ldots,P_s$ satisfying  
\begin{equation} 
\begin{split}
&\bigwedge_{j = 1}^s \exists d.   \varphi^{\beta_{j}}(d) \land 
  \rho(k_1,\ldots, k_m) \land \bigwedge_{i = 1}^s P_i \subseteq p_{L_{\sigma(i)}} \land  \cup_{i = 1}^m p_{L_i} = \dot{\cup}_{i = 1}^s P_i \land \\
 &\bigwedge_{i = 1}^s |P_i| = k_{\sigma(i)} \land \bigwedge_{i = 1}^s p_{\beta_{i}} \cap P_{i} \neq \emptyset 
\end{split}
\end{equation}
From $
\rho(k_1,\ldots, k_n)
$ follows that there is a symbolic table $\overline{s} \in M_S(L_1,\ldots,L_m)$ such that $|\overline{s}|_{L_i} = k_i$ for each $L_i \in \set{L_1,\ldots,L_m}$. From formula~(\ref{eq:te2}) and
\begin{equation*}
p_{L_1} \cup \ldots \cup p_{L_m} = P_1 \dot{\cup} \ldots \dot{\cup} P_s \land 
\bigwedge_{i = 1}^s P_{i} \subseteq p_{L_{\sigma(i)}} \land \bigwedge_{i = 1}^s |P_i| = k_{\sigma(i)} 
\end{equation*}
follows that we can replace the formulae $L_i$ occurring in the symbolic table $\overline{s}$ by the bit-strings representing the elementary Venn regions to which the indices of the sets $P_i$ belong. Moreover, thanks to the condition $\bigwedge_{i = 1}^s p_{\beta_{i}} \cap P_{i} \neq \emptyset$ follows that we can replace the letters $L_i$ by the bit-strings $\beta_i$, defining $\tau$ as $\tau(n) = 
\begin{cases}
\beta_{i} & \text{if } n \in P_{i}
\end{cases}$. In this way, we obtain a table $\tau \in M(L_1,\ldots,L_s)$. We then define the corresponding word over $\mathcal{D}$, thanks to the property $\bigwedge_{i = 1}^s \exists d. \phi^{\beta_{i}}(d)$. Naming the witnesses of these formulae as $d_{i}$, we define $d(n) = 
\begin{cases}
d_{i} & \text{if } n \in P_{i}
\end{cases}$. To conclude, note that:
\[
\Set{ n \in \{0,1\}^* | t_j(n) } = \cup_{\Set{ 1 \le i \le k | \beta_{i}(j) = 1}} P_{i} = \Set{ n \in \{0,1\}^* | \phi_j(d(n)) }
\]
Thus, we have that formula~(\ref{eq:table-power}) is satisfied by the set variables 
\[
S_j := \Set{ n \in \{0,1\}^* | t_j(n) } = \Set{ n \in \{0,1\}^* | \phi_j(d(n)) }
\]
\end{proof}

\section{Quantifier-free Boolean Algebra with Presburger Arithmetic}
\label{section:with}
The arguments following the statement of Theorem~\ref{thm:tableelimination} sketch a non-deterministic procedure for the satisfiability problem of symbolic finite automata, based on the existence of decision procedures for $Th_{\exists^*}(\mathcal{D})$ and $Th_{\exists^*}^{mon}(\langle \mathbb{N}, \subseteq, \sim \rangle)$. In this section, we recall the non-deterministic polynomial time decision procedure for $Th_{\exists^*}^{mon}(\langle \mathbb{N}, \subseteq, \sim \rangle)$. As a consequence, we obtain Corollary~\ref{cor:complexity} which situates the decision problem of symbolic finite automata in the classical complexity hierarchy. This section should also prepare the reader for the extension of these results, where the automaton can require linear arithmetic constraints on the cardinalities of the effective Boolean algebra. This extension is described in Section~\ref{section:cardinalities}. 

Instead of working with $Th_{\exists^*}^{mon}(\langle \mathbb{N}, \subseteq, \sim \rangle)$ directly, we use the logic $\qfbapa$ \cite{kuncak_towards_2007} which has the same expressive power \cite[Section~2]{kuncak_deciding_2006}. The syntax of $\qfbapa$ is given in Figure~\ref{fig:qfbapa-syntax}. The meaning of the syntax is as follows. $F$ presents the Boolean structure of the formula, $A$ stands for the top-level constraints, $B$ gives the Boolean restrictions and $T$ the Presburger arithmetic terms. The operator $\text{dvd}$ stands for the divisibility relation and $\mathcal{U}$ represents the universal set. The remaining interpretations are standard.

\begin{figure}[!ht]
\centering
\begin{align*}
F & ::= A \, | \, F_1 \land F_2 \, | \, F_1 \lor F_2 \, | \, \lnot F \\
A & ::= B_1 = B_2 \, | \, B_1 \subseteq B_2 \, | \, T_1 = T_2 \, | \, T_1 \le T_2 \, | \, K \text{ dvd } T \\
B & ::= x \, | \, \emptyset \, | \, \mathcal{U} \, | \, B_1 \cup B_2 \, | \, B_1 \cap B_2 \, | \, B^c \\
T & ::= k \, | \, K \, | \,  T_1 + T_2 \, | \, K \cdot T \, |  \, |B| \\
K & ::= \ldots \, | \, -2 \, | \, -1 \, | \, 0 \, | \, 1 \, | \, 2 \, | \, \ldots
\end{align*}
\caption{$\qfbapa$'s syntax}
\label{fig:qfbapa-syntax}
\end{figure}

The satisfiability problem of this logic is reducible to propositional satisfiability in polynomial time. Our proofs will rely on the method of \cite{kuncak_towards_2007}, which we sketch briefly here. The basic argument to establish a $\np$ complexity bound on the satisfiability problem of $\qfbapa$ is based on a theorem by Eisenbrand and Shmonin \cite{eisenbrand_caratheodory_2006}, which in our context says that any element of an integer cone can be expressed in terms of a polynomial number of generators. Figure~\ref{fig:pa-verifier} gives a verifier for this basic version of the algorithm. The algorithm uses an auxiliary verifier $V_{PA}$ for the quantifier-free fragment of Presburger arithmetic. The key step is showing equisatisfiability between 2.(b) and 2.(c). If $x_1, \ldots, x_k$ are the variables occurring in $b_0, \ldots, b_p$ then we write $p_\beta = \bigcap\limits_{i = 1}^k x_i^{e_i}$ for $\beta = (e_1,\ldots,e_k) \in \{0,1\}^k$ where we define $x^1 := x$ and $x^0 := \mathcal{U} \setminus x$ as before. If we define $\llbracket b_i \rrbracket_{\beta_j}$ as the evaluation of $b_i$ as a propositional formula with the assignment given in $\beta$ and introduce variables $l_\beta = |p_\beta|$, then $|b_i| =  \sum\limits_{j = 0}^{2^e-1} \llbracket b_i \rrbracket_{\beta_j} l_{\beta_j}$, so the restriction $\bigwedge\limits_{i = 0}^p |b_i| = k_i$ in 2.(b) becomes $\bigwedge\limits_{i = 0}^p \sum\limits_{j = 0}^{2^e-1} \llbracket b_i \rrbracket_{\beta_j} l_{\beta_j} = k_i$ which can be seen as a linear combination in the set of vectors $
\{(\llbracket b_0 \rrbracket_{\beta_j}, \ldots, \llbracket b_p \rrbracket_{\beta_j}). j \in \{0, \ldots, 2^e-1\}\} 
$, i.e. as
\[
\bigwedge\limits_{i = 0}^p \sum\limits_{j = 0}^{2^e-1} 
\begin{pmatrix}
\llbracket b_0 \rrbracket_{\beta_j} \\ \vdots \\ \llbracket b_p \rrbracket_{\beta_j}
\end{pmatrix}
l_{\beta_j} = k_i
\]
Eisenbrand-Shmonin's result allows then to derive 2.(c) for $N$ polynomial in $|x|$. In the other direction, it is sufficient to set $l_{\beta_j} = 0$ for $j \in \{0, \ldots, 2^e-1\} \setminus \{i_1, \ldots, i_N\}$. Thus, we have:

\begin{theorem}[{\cite{kuncak_towards_2007}}] \label{thm:qfbapa-complexity}
The satisfiability problem of $\qfbapa$ is in NP. 
\end{theorem}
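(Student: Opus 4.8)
The plan is to exhibit the nondeterministic polynomial-time verifier displayed in Figure~\ref{fig:pa-verifier} and to argue both its correctness and its polynomial running time. First I would dispose of the Boolean skeleton $F$: since $F$ is a propositional combination of the top-level constraints $A$, the verifier guesses a truth assignment to these atoms that satisfies $F$ and then discards $F$, retaining only the conjunction of the selected constraints together with the negations of the unselected ones. This guess is linear in the size of the input formula $\phi$.

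Next I would reduce every set constraint to a cardinality constraint. A constraint $B_1 = B_2$ is equivalent to $|(B_1 \cap B_2^c) \cup (B_1^c \cap B_2)| = 0$ and $B_1 \subseteq B_2$ to $|B_1 \cap B_2^c| = 0$, while their negations turn the same expressions into $\geq 1$. Introducing a fresh integer variable $k_i$ for the cardinality $|b_i|$ of each Boolean term $b_i$ that occurs, the residual problem becomes a quantifier-free Presburger formula over the $k_i$ and the integer variables already present in the $T$-terms, linked to the set algebra only through the equations $|b_i| = k_i$. This is step 2.(a)–2.(b) of the verifier.

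The semantic core is to expand each cardinality over the Venn regions of the set variables $x_1,\ldots,x_k$ occurring in $b_0,\ldots,b_p$. Writing $l_\beta = |p_\beta|$ for the cardinality of the region $p_\beta = \bigcap_{i=1}^k x_i^{e_i}$, we have $|b_i| = \sum_j \llbracket b_i \rrbracket_{\beta_j}\, l_{\beta_j}$, so the conjunction $\bigwedge_i |b_i| = k_i$ states exactly that the vector $(k_0,\ldots,k_p)$ lies in the integer cone generated by the $0/1$ column vectors $(\llbracket b_0 \rrbracket_{\beta_j},\ldots,\llbracket b_p \rrbracket_{\beta_j})$ indexed by $\beta_j \in \{0,1\}^k$. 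The main obstacle, and the whole reason the theorem is nontrivial, is that there are $2^k$ such regions, so instantiating all the $l_\beta$ directly is exponential and a naive passage to Presburger arithmetic would not run in polynomial time.

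The decisive step is to invoke the Eisenbrand–Shmonin theorem to collapse this exponential family to a polynomial one. It guarantees that any element of the integer cone generated by vectors with entries bounded by $M$ in dimension $d = p+1$ is already a nonnegative integer combination of at most $N = O(d \log(dM))$ of the generators; since the generators here are $0/1$ vectors we have $M = 1$, so $N$ is polynomial in $|\phi|$. The verifier therefore guesses the $N$ indices with $l_\beta \neq 0$, instantiates only those region variables, and hands the resulting polynomially-sized Presburger system, together with a witness assignment of the $k_i$, the $l_\beta$, and the $T$-variables, to the subroutine $V_{PA}$; this is step 2.(c). Correctness reduces to the equisatisfiability of 2.(b) and 2.(c): the forward direction is precisely Eisenbrand–Shmonin, and the backward direction is immediate by setting $l_{\beta_j} = 0$ for every index outside the guessed set. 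Finally, because the number of guessed regions, the bit-length of the integer witnesses (again bounded through Eisenbrand–Shmonin), and the $V_{PA}$ certificate are all polynomial in $|\phi|$, the entire procedure runs in nondeterministic polynomial time, placing the satisfiability problem of $\qfbapa$ in $\np$.
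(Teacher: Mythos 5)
Your proposal is correct and follows essentially the same route as the paper: reduce set atoms to cardinality constraints, expand cardinalities over Venn regions to obtain an integer-cone membership condition, invoke the Eisenbrand--Shmonin bound to guess a polynomial-size set of nonzero regions, and delegate the resulting quantifier-free Presburger system to an NP verifier $V_{PA}$, with the backward direction of the equisatisfiability given by zeroing out the unguessed $l_\beta$. The only (immaterial) deviation is that you guess a satisfying assignment of the propositional skeleton up front, whereas the paper's verifier keeps the Boolean structure inside the Presburger formula $G$ and lets $V_{PA}$ handle it.
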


From Theorems~\ref{thm:tableelimination} and \ref{thm:qfbapa-complexity}, we obtain the following improvement of \cite[Theorem~2]{veanes_symbolic_2015}:

\begin{corollary} \label{cor:complexity}
Let $Th_{\exists^*}(\mathcal{D})$ be the existential first-order theory of the formulae used in the transitions of the symbolic tree automaton $M$. If $Th_{\exists^*}(\mathcal{D}) \in \cc$ for some $\cc \supseteq \np$ then $\mathcal{L}(M) \neq \emptyset \in \cc$. 
\end{corollary}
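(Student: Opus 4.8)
The plan is to read the corollary directly off Theorem~\ref{thm:tableelimination}, which rewrites the non-emptiness condition (formula~(\ref{eq:table-power})) as the equivalent formula~(\ref{eq:elimprop}). I would describe a single nondeterministic procedure: first guess the discrete data $s \in [m]$, the injection $\sigma : [s] \hookrightarrow [m]$, and the bit-strings $\beta_1,\ldots,\beta_s \in \{0,1\}^k$. Since $s \le m$, the map $\sigma$ has polynomial size, and there are at most $m$ bit-strings $\beta_j$ each of length $k$, this certificate has size polynomial in $|M|$ and can be produced within $\np \subseteq \cc$.

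The key structural point I would emphasise is that, once this discrete data is fixed, the matrix of formula~(\ref{eq:elimprop}) decouples into two independent checks that share nothing beyond the already-guessed $\beta_j$. The first is the conjunction $\bigwedge_{j=1}^s \exists d.\, \phi^{\beta_j}(d)$: for each fixed $\beta_j$ this is a single query of the existential first-order theory $Th_{\exists^*}(\mathcal{D})$ on a formula of size $O(k)$, and there are at most $m$ of them, so this block is decided by polynomially many calls to the assumed $\cc$-procedure. The second block — the Parikh predicate $\rho(k_1,\ldots,k_m)$ from Lemma~\ref{lem:parikhlinear} together with the inclusions $P_i \subseteq p_{L_{\sigma(i)}}$, the partition equality $\cup_{i=1}^m p_{L_i} = \dot{\cup}_{i=1}^s P_i$, the cardinality equalities $|P_i| = k_{\sigma(i)}$, and the non-emptiness conditions $p_{\beta_i} \cap P_i \neq \emptyset$ — is, with $s,\sigma,\beta_1,\ldots,\beta_s$ now constants, a single $\qfbapa$ instance of size polynomial in $|M|$ (here $\rho$, being an existential Presburger formula, folds into the arithmetic part of $\qfbapa$). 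By Theorem~\ref{thm:qfbapa-complexity} this instance is decidable in $\np$.

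Putting the two blocks together, non-emptiness is decided by a nondeterministic polynomial-time computation that makes one $\np$-check (the $\qfbapa$ instance) and polynomially many $\cc$-queries (instances of $Th_{\exists^*}(\mathcal{D})$). Because $\cc \supseteq \np$, the guessing phase and the $\qfbapa$ check are subsumed by $\cc$, and a conjunction of polynomially many $\cc$-queries again lands in $\cc$; hence $\mathcal{L}(M) \neq \emptyset \in \cc$. I expect the only delicate step to be the complexity bookkeeping of this combination: verifying that all guessed objects are genuinely of polynomial size and that $\cc$, being at least $\np$, is closed under the nondeterministic polynomial-time reduction and the polynomially many oracle queries used here. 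The soundness and completeness of the reduction itself come for free, as they are exactly the equivalence asserted by Theorem~\ref{thm:tableelimination}.
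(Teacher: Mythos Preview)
Your proposal is correct and follows essentially the same approach as the paper's own proof: nondeterministically guess $s,\sigma,\beta_1,\ldots,\beta_s$, then invoke the $\qfbapa$ procedure of Theorem~\ref{thm:qfbapa-complexity} and the $Th_{\exists^*}(\mathcal{D})$ oracle on the two resulting blocks of formula~(\ref{eq:elimprop}), with correctness inherited from Theorem~\ref{thm:tableelimination}. Your write-up is in fact more explicit than the paper's three-line proof, and you rightly flag the closure-of-$\cc$ bookkeeping as the only point needing care (a detail the paper leaves implicit); note that the $s$ satisfiability queries $\exists d.\,\phi^{\beta_j}(d)$ can be folded into a single $Th_{\exists^*}(\mathcal{D})$ instance $\exists d_1,\ldots,d_s.\bigwedge_j \phi^{\beta_j}(d_j)$, which sidesteps any worry about closure under polynomially many oracle calls.
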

\begin{proof}
The procedure non-deterministically guesses the value of the variables $s,\sigma,\beta_1,\ldots,\beta_s$ and uses a decision procedure for $Th_{\exists^*}(\mathcal{D})$ and a non-deterministic polynomial time decision procedure for $\qfbapa$ to check the corresponding sub-formulae in (\ref{eq:elimprop}). The correctness of the procedure follows from Theorem~\ref{thm:tableelimination}.
\end{proof}

Observe that in typical examples, $Th_{\exists^*}(\mathcal{D}) \in \np$ and thus, from Corollary~\ref{cor:complexity} it follows that $\mathcal{L}(M) \neq \emptyset \in \np$. This partially explains the success in the automation of SFAs in SMT solvers, which rely on solvers for propositional satisfiability.

\section{Decision Procedure for Non-Emptiness with Cardinalities}
\label{section:cardinalities}
We now consider a generalisation of the language of a symbolic tree automaton from Lemma~\ref{lem:automatonlanguage} with cardinality constraints on the effective Boolean algebra. Similar extensions for
related models of automata are considered in the literature on data words \cite{figueira_reasoning_2022}.

\begin{definition}
A symbolic tree automaton  with cardinalities accepts a language  of the form:
\[
\mathcal{L}(M) = \left 
\{ d \in \mathcal{D}^* \middle \vert \begin{array}{l}
F(S_1,\ldots,S_k) \land \bigwedge_{i = 1}^k S_i = \Set{ n \in \{0,1\}^* | \phi_i(d(n)) } \land \\ \exists \tau \in M(L_1,\ldots,L_m). \bigwedge_{i = 1}^k S_i = \Set{ n \in \{0,1\}^* | \tau_i(n) } 
\end{array}
\right\}
\]
where $F$ is a formula from $\qfbapa$. 
\end{definition}

Thus, checking non-emptiness of the language of a symbolic tree automaton with cardinalities reduces to checking whether the following formula is true:
\begin{equation} \label{eq:table-power-card}
\begin{split}
\exists S_1, \ldots, S_k. &F(S_1,\ldots,S_k) \land \\ &\exists d. \bigwedge_{i = 1}^k S_i = \Set{ n \in \{0,1\}^* | \phi_i(d(n)) } \land \\ &\exists \tau \in M(L_1,\ldots,L_m) \land \bigwedge_{i = 1}^k S_i = \Set{ n \in \{0,1\}^* | \tau_i(n) }
\end{split}
\end{equation}

To show that Theorem~\ref{thm:tableelimination} and Corollary~\ref{cor:complexity} stay true with linear arithmetic constraints on the cardinalities, we need to repeat part of the argument in Theorem~\ref{thm:tableelimination} since if $F$ denotes the newly introduced $\qfbapa$ formula and $G,H$ are the formulae shown equivalent in Theorem~\ref{thm:tableelimination}, then, from: 
\[
\exists S_1,\ldots,S_k. F(S_1,\ldots,S_k) \land G(S_1,\ldots,S_k)
\]
and
\[
\Big[ \exists S_1,\ldots,S_k. G(S_1,\ldots,S_k) \Big] \iff \Big[ \exists S_1,\ldots,S_k. H(S_1,\ldots,S_k) \Big]
\]
it does not follow that: 
\[
\exists S_1,\ldots,S_k. F(S_1,\ldots,S_k) \land H(S_1,\ldots,S_k) 
\]
Instead, the algorithm derives the cardinality constraints from each theory and then uses the sparsity of solutions \textit{over the satisfiable regions}. In the proof, we use the notations $\llbracket b_i \rrbracket_{\beta_j}$ and $l_{\beta}$ introduced in Section~\ref{section:with}.

\begin{theorem}
\label{thm:card}
Formula~(\ref{eq:table-power-card}) is equivalent to:
\begin{equation} \label{eq:elimpropcard}
\begin{split}
\exists N \le p(|F|), &\exists s \in [m]. \sigma: [s] \hookrightarrow [m]. 
\exists \beta_{1}, \ldots,\beta_{N} \in \{0,1\}^k. \bigwedge_{j = 1}^N \exists d. \phi^{\beta_{j}}(d) \land \\ \exists k_1,\ldots,k_m.& \exists S_1,\ldots, S_k, P_1,\ldots,P_s. \\ 
  & \rho(k_1,\ldots, k_m) \land \bigwedge_{i = 1}^s P_i \subseteq p_{L_{\sigma(i)}} \land  \cup_{i = 1}^m p_{L_i} = \dot{\cup}_{i = 1}^s P_i \land \\ 
  &\bigwedge_{i = 1}^s |P_i| = k_{\sigma(i)} \land \cup_{i = 1}^N p_{\beta_i} = \dot{\cup}_{i = 1}^s P_{i} \land F(S_1,\ldots,S_k)
\end{split}
\end{equation}
where $p$ is a polynomial and $|F|$ is the number of symbols used to write $F$. 
\end{theorem}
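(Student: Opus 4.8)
The plan is to adapt the proof of Theorem~\ref{thm:tableelimination} rather than invoke its conclusion as a black box, precisely because the equivalence established there is of the form $\exists \vec{S}.\, G \iff \exists \vec{S}.\, H$, and conjoining a further constraint $F(\vec{S})$ inside the existential quantifier is not preserved by such an equivalence (as the discussion preceding the statement makes explicit). So I would re-run the two inclusions of Theorem~\ref{thm:tableelimination}, carrying the conjunct $F(S_1,\ldots,S_k)$ along unchanged, and isolate the one place where the previous argument genuinely used the freedom to guess elementary Venn regions $\beta_1,\ldots,\beta_s$ one per partition block.

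For the forward direction $(\Rightarrow)$, I would proceed exactly as in Theorem~\ref{thm:tableelimination}: from a satisfying assignment of formula~(\ref{eq:table-power-card}) extract the symbolic table $\overline{s}$, define $k_i := |\overline{s}|_{L_i}$, the number $s$ of nonzero counts, the injection $\sigma$, and the partition sets $P_i$. The conjuncts $\rho(k_1,\ldots,k_m)$, $P_i \subseteq p_{L_{\sigma(i)}}$, $|P_i| = k_{\sigma(i)}$ and $\cup_i p_{L_i} = \dot{\cup}_i P_i$ all carry over verbatim using the equalities~(\ref{eq:te2}), and $F(S_1,\ldots,S_k)$ holds because the witnessing sets $S_1,\ldots,S_k$ are unchanged. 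The only delicate conjunct is $\cup_{i=1}^N p_{\beta_i} = \dot{\cup}_{i=1}^s P_i$: here, instead of choosing a single $\beta_i$ meeting each $P_i$, I would enumerate, for each index $n$ appearing in some $P_i$, the elementary Venn region $\beta$ that the concrete value $d(n)$ inhabits, collect the distinct such regions, and then invoke the Eisenbrand--Shmonin sparsity bound (Theorem~\ref{thm:qfbapa-complexity} and the integer-cone argument recalled in Section~\ref{section:with}) to prune this collection down to at most $N \le p(|F|)$ regions while preserving satisfiability of the combined cardinality system coming from $\rho$ and from $F$. This is the step that forces $N$ rather than $s$ regions and that ties the bound to the size of $F$.

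For the converse $(\Leftarrow)$, I would reverse the construction as in Theorem~\ref{thm:tableelimination}: from $\rho(k_1,\ldots,k_m)$ recover a symbolic table $\overline{s} \in M_S(L_1,\ldots,L_m)$ with $|\overline{s}|_{L_i} = k_i$, then use the partition constraints $\cup_i p_{L_i} = \dot{\cup}_i P_i$, $P_i \subseteq p_{L_{\sigma(i)}}$ and $|P_i| = k_{\sigma(i)}$ to relabel $\overline{s}$ by elementary Venn regions, and finally use $\cup_{i=1}^N p_{\beta_i} = \dot{\cup}_{i=1}^s P_i$ together with the witnesses of $\bigwedge_{j=1}^N \exists d.\, \phi^{\beta_j}(d)$ to build the concrete word $d$ and the table $\tau \in M(L_1,\ldots,L_m)$. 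Since the sets $S_1,\ldots,S_k$ are reconstructed identically to Theorem~\ref{thm:tableelimination}, the extra conjunct $F(S_1,\ldots,S_k)$ transfers directly to a witness of formula~(\ref{eq:table-power-card}).

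The main obstacle is the sparsity step in the forward direction. In Theorem~\ref{thm:tableelimination} the cardinality system came solely from the Parikh constraint $\rho$ and the partition, so guessing one representative region per block sufficed; now the cardinality system is the \emph{union} of the constraints induced by $\rho$ and those induced by the $\qfbapa$ formula $F$, and the Venn-region cardinalities $l_\beta = |p_\beta|$ must simultaneously satisfy both. The correct count of regions is therefore governed by the number of generators of the combined integer cone, which by Eisenbrand--Shmonin is polynomial in the description size, hence the $N \le p(|F|)$ bound. I expect the bookkeeping to consist in rewriting the conjuncts $\bigwedge_i |P_i| = k_{\sigma(i)}$ and the linear constraints hidden in $F$ in the variables $l_{\beta_j}$ exactly as in the verifier of Section~\ref{section:with}, so that the sparsity theorem applies uniformly, and in checking that setting the pruned $l_{\beta_j}$ to zero does not destroy any of the nonemptiness witnesses $\exists d.\, \phi^{\beta_j}(d)$.
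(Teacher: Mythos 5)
Your proposal takes essentially the same route as the paper's proof: re-run the argument of Theorem~\ref{thm:tableelimination} carrying the conjunct $F$ along, translate the resulting combined condition (the Parikh constraint $\rho$, the partition constraints, and $F$) into the Venn-region cardinality variables $l_{\beta}$ exactly as in the \qfbapa{} verifier of Section~\ref{section:with}, restrict attention to inhabited regions so that each kept $\beta_j$ retains a witness for $\exists d.\,\phi^{\beta_j}(d)$, and apply Eisenbrand--Shmonin to obtain the sparse family $\beta_1,\ldots,\beta_N$ with $N \le p(|F|)$; the backward direction is likewise identical to the paper's. One wording caveat: in the final witness of~(\ref{eq:elimpropcard}) the sets $S_1,\ldots,S_k$ are \emph{not} unchanged---the sparse solution induces a new model, from which the paper explicitly reconstructs new partition sets, a new symbolic table and a new concrete tree---so $F$ holds in that witness not because the original sets are preserved but because $F$'s linear constraints are part of the system preserved by the sparsification, which is exactly the mechanism your ``main obstacle'' paragraph describes.
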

\begin{proof}
The proof is deferred to the appendix.
\end{proof}

We can thus formulate the analogous to Corollary~\ref{cor:complexity} in the case of finite symbolic automata with cardinalities.

\begin{corollary} \label{cor:complexitycard}
Let $Th_{\exists^*}(\mathcal{D})$ be the existential first-order  theory of the formulae used in the transitions of a symbolic finite automaton with cardinality constraints. If $Th_{\exists^*}(\mathcal{D}) \in \cc$ for some $\cc \supseteq \np$ then $\mathcal{L}(M) \neq \emptyset \in \cc$. 
\end{corollary}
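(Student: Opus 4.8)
The plan is to establish Corollary~\ref{cor:complexitycard} by exactly mirroring the proof of Corollary~\ref{cor:complexity}, now invoking Theorem~\ref{thm:card} in place of Theorem~\ref{thm:tableelimination}. The key observation is that Theorem~\ref{thm:card} gives us an equivalence between the non-emptiness formula~(\ref{eq:table-power-card}) and the explicit formula~(\ref{eq:elimpropcard}), and that the latter decomposes cleanly into a part belonging to $Th_{\exists^*}(\mathcal{D})$ and a part belonging to $\qfbapa$, just as in the cardinality-free case. So the decision procedure first non-deterministically guesses the witnesses $N, s, \sigma, \beta_1,\ldots,\beta_N$ appearing under the existential quantifiers at the front of~(\ref{eq:elimpropcard}).

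Once these values are guessed, I would split the remaining verification into two independent checks. First, the conjunction $\bigwedge_{j=1}^N \exists d.\, \phi^{\beta_j}(d)$ consists of $N$ independent satisfiability queries to the existential first-order theory of the input characters; since $N \le p(|F|)$ is polynomially bounded by Theorem~\ref{thm:card}, this amounts to polynomially many calls to a decision procedure for $Th_{\exists^*}(\mathcal{D})$, which lies in $\cc$ by hypothesis. Second, the conjunct $\rho(k_1,\ldots,k_m) \land \bigwedge_{i=1}^s P_i \subseteq p_{L_{\sigma(i)}} \land \cup_{i=1}^m p_{L_i} = \dot{\cup}_{i=1}^s P_i \land \bigwedge_{i=1}^s |P_i| = k_{\sigma(i)} \land \cup_{i=1}^N p_{\beta_i} = \dot{\cup}_{i=1}^s P_i \land F(S_1,\ldots,S_k)$ is a single $\qfbapa$ formula: $\rho$ is the existential Presburger formula from Lemma~\ref{lem:parikhlinear}, the set-inclusion, partition, and cardinality constraints are all expressible in the Boolean-algebra-with-cardinalities fragment, and $F$ is a $\qfbapa$ formula by the definition of a symbolic tree automaton with cardinalities. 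This whole conjunction is therefore discharged by a single invocation of the $\np$ decision procedure for $\qfbapa$ guaranteed by Theorem~\ref{thm:qfbapa-complexity}.

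The correctness of the overall procedure is then immediate from the equivalence in Theorem~\ref{thm:card}: the guessed formula~(\ref{eq:elimpropcard}) is satisfiable if and only if~(\ref{eq:table-power-card}) is, which holds if and only if $\mathcal{L}(M) \neq \emptyset$. For the complexity bound, the non-deterministic guessing phase is polynomial because $N$, $s$, each $\beta_j \in \{0,1\}^k$, and $\sigma$ are all polynomially sized in $|M|$ and $|F|$; the $\qfbapa$ check runs in $\np \subseteq \cc$; and the polynomially many $Th_{\exists^*}(\mathcal{D})$ queries each run in $\cc$. Since $\cc \supseteq \np$ is closed under polynomially many such subroutine calls together with a polynomial nondeterministic guess, the entire procedure places $\mathcal{L}(M) \neq \emptyset$ in $\cc$.

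The main subtlety to watch for is that, unlike in Corollary~\ref{cor:complexity}, one cannot simply conjoin $F$ to the already-established equivalence of Theorem~\ref{thm:tableelimination}; as the discussion preceding Theorem~\ref{thm:card} makes explicit, existential equivalence of $G$ and $H$ does not transfer under conjunction with $F$. All the genuine work of reconciling the cardinality constraints coming from $F$ with those coming from the Parikh image—via the sparsity-of-solutions argument over the satisfiable regions—has already been absorbed into the statement of Theorem~\ref{thm:card}, in particular into the polynomial bound $N \le p(|F|)$ and the revised final conjunct $\cup_{i=1}^N p_{\beta_i} = \dot{\cup}_{i=1}^s P_i$. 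Consequently, once Theorem~\ref{thm:card} is assumed, the corollary follows by the routine guess-and-check argument above, and the proof is essentially identical in structure to that of Corollary~\ref{cor:complexity}.
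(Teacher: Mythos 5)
Your proposal is correct and takes essentially the same approach as the paper: the paper's proof of Corollary~\ref{cor:complexitycard} is literally ``As in Corollary~\ref{cor:complexity}'', i.e.\ guess the front existential witnesses of formula~(\ref{eq:elimpropcard}), discharge the $\bigwedge_{j=1}^N \exists d.\,\phi^{\beta_j}(d)$ conjunct with polynomially many calls to $Th_{\exists^*}(\mathcal{D})$, and discharge the remaining conjunction (including $F$) with the $\np$ procedure for $\qfbapa$, with correctness given by Theorem~\ref{thm:card}. Your write-up simply makes explicit what the paper leaves implicit, including the correct observation that the reconciliation of $F$ with the Parikh constraints is already absorbed into Theorem~\ref{thm:card}.
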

\begin{proof}
As in Corollary~\ref{cor:complexity}.
\end{proof}

\section{Conclusion}
\label{section:conclusion}
We have revisited the model of symbolic tree automata as it was introduced in \cite{veanes_symbolic_2015}. We have obtained tight complexity bounds on their non-emptiness problem. Our methodology follows the Feferman-Vaught decomposition technique in that it reduces the non-emptiness problem of the automaton to the satisfiability problem of the existential first-order theory of the characters accepted by the automaton and the satisfiability problem of the existential monadic second-order theory of the indices. 

To combine these two distinct theories we use the ideas from the combination method through sets and cardinalities of Wies, Piskac and Kun\v{c}ak \cite{wies_combining_2009} and the computation of an equivalent linear-sized existentially quantified Presburger arithmetic formula from the Parikh image of a regular tree language. The latter combines two observations. The first observation by Klaedtke and Rue{\ss} \cite{klaedtke_parikh_2002} connects this problem with the computation of the Parikh image of a context-free grammar. The second observation by Verma, Seidl and Schwentick \cite{verma_complexity_2005} allows computing the Parikh image of a context-free grammar in terms of a linear-sized Presburger arithmetic formula. A crucial step in the proofs is a partitioning argument for the underlying Venn regions. We profit from the analysis in \cite{kuncak_towards_2007} to extend our arguments to the satisfiability problem of finite symbolic automata that consider linear arithmetic restrictions over the cardinalities of the Boolean algebra associated with the symbolic finite automaton. 

In future work, we plan to extend our methods to other variants of symbolic automata to which we believe similar techniques may be applicable. Another interesting research direction would be to consider extensions of the language that allow free variables in set interpretations of the form~(\ref{eq:1}), which seems to have applications to various satisfiability problems. Recently, Hague et alii. have found some \textit{parallel} with the results in this paper \cite{hague_parikhs_2023}. However, our main motivation was the application of the Feferman-Vaught decidability technique in the bounded complexity setting. In particular, this work shows the usefulness of taking the \textit{reduction sequence} of the Feferman-Vaught theorem to be a \textit{partitioning sequence} \cite[Theorem~3.1]{feferman_first_1959}. This was not completely evident at first, and improves over the results in 
\cite{raya_algebraic_2024}, by allowing the ordering relation to range over non-disjoint regions. A natural continuation of our work would be to find similar decompositions for different decision problems of interest for symbolic automata.

\newpage
\appendix
\section{Verifier for $\qfbapa$}
\begin{figure*}[ht!]
\fbox{\parbox{.95\textwidth}{
On input $\langle x, w \rangle$:

\begin{enumerate}
\setlength\itemsep{1em}
\item Interpret $w$ as:

\vspace{1em}

\begin{enumerate}
    \item a list of indices $i_1, \ldots, i_N \in \{0, \ldots, 2^e-1 \}$ where $e$ is the number of set variables in $x$.
    \item a certificate $C$ for $V_{PA}$ on input $x'$ defined below.
\end{enumerate}

\vspace{.5em}

\item Transform $x$ into $x'$ by:

\vspace{1em}

\begin{enumerate}
    \item rewriting boolean expressions according to the rules: \begin{align*}
    b_1 = b_2 & \mapsto b_1 \subseteq b_2 \land b_2 \subseteq b_1 \\
    b_1 \subseteq b_2 & \mapsto |b_1 \cap b_2^c| = 0
    \end{align*}
    
    \item introducing variables $k_i$ for cardinality expressions: $$G \land \bigwedge_{i = 0}^{p} |b_i| = k_i$$ where $G$ is the resulting quantifier-free Presburger arithmetic formula.
    
    \item rewriting into:
    $$ G \land \bigwedge\limits_{j = i_1, \ldots, i_N} l_{\beta_j} \ge 0 \land \bigwedge_{i = 0}^{p} \sum_{j = i_1, \ldots, i_N} \llbracket b_i \rrbracket_{\beta_j} \cdot l_{\beta_j} = k_i$$
\end{enumerate}

\item Run $V_{PA}$ on $\langle x', C \rangle$.

\item Accept iff $V_{PA}$ accepts.
\end{enumerate}
}}
\caption{Verifier for $\qfbapa$}
\label{fig:pa-verifier}
\end{figure*}

\section{Proof of the Extension with Cardinalities}
\begin{proof}[Proof of Theorem~\ref{thm:card}]
$\Rightarrow)$ If formula~(\ref{eq:table-power-card}) is true, then there are sets $S_1,\ldots,S_k$, a finite tree $d$ and a tree table $\tau$ such that:
\begin{align} 
\label{eq:te2card}
\begin{split}
&F(S_1,\ldots,S_k) \land \bigwedge_{i = 1}^k S_i = \Set{ n \in \{0,1\}^* | \phi_i(d(n)) } \land \\ 
&\tau \in M(L_1,\ldots,L_m) \land \bigwedge_{i = 1}^k S_i = \Set{ n \in \{0,1\}^* | \tau_i(n) }
\end{split}
\end{align}
Thus, there exists a symbolic table $\overline{s} \in M_S(L_1,\ldots,L_s)$ corresponding to $\tau$. We define $k_i := |\overline{s}|_{L_i}, s = |\Set{ i | k_i \neq 0 }|$, $\sigma$ maps the indices in $[s]$ to the indices of the terms for which $k_i$ is non-zero and $P_i = \Set{ n \in \{0,1\}^* | \overline{s}(n) = L_{\sigma(i)} }$. As in Theorem~\ref{thm:tableelimination}, we have the equalities $p_{L_i} = \Set{ n \in \{0,1\}^* | \tau(n) \models L_i }$, $p_{L_i} = \Set{ n \in \{0,1\}^* | L_i(\overline{\phi}(d)) }$ and we can show that the following formula holds:
\begin{equation} \label{eq:derivedcond}
\begin{split}
&\rho(k_1,\ldots, k_m) \land \bigwedge_{i = 1}^m P_i \subseteq p_{L_{\sigma(i)}} \land \cup_{i = 1}^m p_{L_i} = \dot{\cup}_{i = 1}^s P_i \land \\ & \bigwedge_{i = 1}^s |P_i| = k_{\sigma(i)} \land F(S_1,\ldots,S_k)
\end{split}
\end{equation}
We need to find a sparse model of (\ref{eq:derivedcond}). To achieve this, we follow the methodology in Theorem~\ref{thm:qfbapa-complexity}. This leads to a system of equations of the form:
\[
\exists c_1,\ldots,c_p. G \land \sum_{j = 0}^{2^e - 1} 
\begin{pmatrix}
\llbracket b_0 \rrbracket_{\beta_j} \\
\cdots \\
\llbracket b_p \rrbracket_{\beta_j}
\end{pmatrix}\cdot l_{\beta_j} = 
\begin{pmatrix}
c_1 \\
\ldots \\
c_p
\end{pmatrix}
\]
We remove those elementary Venn regions where $l_{\beta} = 0$. This includes regions whose associated formula in the interpreted Boolean algebra is unsatisfiable, and regions corresponding to tree table entries not occurring in $\tau$. This transformation gives a reduced set of indices $\mathcal{R}$ participating in the sum. 

Using Eisenbrand-Shmonin's theorem, we have a polynomial (in the size of the original formula) family of Venn regions $\beta_1, \ldots,\beta_N$ and corresponding cardinalities $l_{\beta_1}',\ldots,l_{\beta_N}'$, which we can assume to be non-zero, such that
\begin{equation} \label{eq:reduced}
\exists c_1,\ldots,c_p. G \land \sum_{\beta \in \set{\beta_1, \ldots,\beta_N} \subseteq \mathcal{R}}
\begin{pmatrix}
\llbracket b_0 \rrbracket_{\beta_j} \\
\cdots \\
\llbracket b_p \rrbracket_{\beta_j}
\end{pmatrix}\cdot l_{\beta_j}' = 
\begin{pmatrix}
c_1 \\
\ldots \\
c_p
\end{pmatrix}
\end{equation}

% For each $\beta \in \mathcal{R}$, since $l_\beta \neq 0$, there exists $n_\beta \in \N$ such that $
% \varphi^{\beta}(d(n_{\beta}))
% $ and $t^{\beta}(n_{\beta})$. In particular this happens for $\beta \in \{\beta_1, \ldots, \beta_N\}$.

The satisfiability of formula~(\ref{eq:reduced}) implies the existence of sets of indices $p_{\beta}'$ satisfying the conditions derived in formula~(\ref{eq:derivedcond}). However, it does not imply which explicit indices belong to these sets and which are the contents corresponding to each index. From the condition 
\begin{equation*}
\rho(k_1,\ldots, k_n) \land \bigwedge_{i = 1}^n P_i' \subseteq p_{L_{\sigma(i)}}' \land \cup_{i = 1}^n p_{L_i}' = \dot{\cup}_{i = 1}^s P_i' \land \bigwedge_{i = 1}^s |P_i'| = k_{\sigma(i)} 
\end{equation*}
follows that there is a symbolic tree table $\overline{s}'$ satisfying $M_S(L_1,\ldots,L_n)$ with $k_{\sigma(i)}$ letters $L_{\sigma(i)}$ and that these letters are made concrete by entries in $P_i'$ for each $i \in \{1,\ldots,s\}$. We take the Venn regions $\beta \in \{\beta_1,\ldots,\beta_N\}$ such that $P_i' \supseteq p_{\beta}$ and label the corresponding entries in $\overline{s}'$ with $\beta$. In this way, we obtain a corresponding concrete tree table $\tau'$. This makes the indices in each Venn region concrete. To make the contents of the indices concrete, note that for each $\beta \in \mathcal{R}$, since $l_\beta \neq 0$, the formula $\exists d.
\phi^{\beta}(d)$ is true. In particular, this applies to each $\beta \in \{\beta_1, \ldots, \beta_N\}$. Thus, we obtain witnesses $d_1,\ldots,d_N$. We form a tree by replacing each letter $\beta$ in $\tau'$ by the corresponding value $d_{\beta}$. Then we have that formula~(\ref{eq:elimpropcard}) holds too. 

\vspace{.5em}

$\Leftarrow)$ If formula~(\ref{eq:elimpropcard}) is true, then there is $N \le p(|F|)$ where $p$ is a polynomial, $s \in [m]$, $\beta_1,\ldots,\beta_N \in \{0,1\}^k, k_1,\ldots,k_m \in \N$ and sets $S_1,\ldots,S_k, P_1,\ldots, P_s$ such that
\begin{align*}
\bigwedge_{j = 1}^N \exists d. \phi^{\beta_{j}}(d) \land &\rho(k_1,\ldots, k_m) \land \bigwedge_{i = 1}^s P_i \subseteq p_{L_{\sigma(i)}} \land  \cup_{i = 1}^m p_{L_i} = \dot{\cup}_{i = 1}^s P_i \land \\ 
  &\bigwedge_{i = 1}^s |P_i| = k_{\sigma(i)} \land \cup_{i = 1}^N p_{\beta_i} = \dot{\cup}_{i = 1}^s P_{i} \land F(S_1,\ldots,S_k)
\end{align*}

From $
\rho(k_1,\ldots, k_n)
$ follows that there is a symbolic table $\overline{s} \in M(L_1,\ldots,L_m)$ such that $|\overline{s}|_{L_i} = k_i$ for each $L_i \in \set{L_1,\ldots,L_m}$. From formula~(\ref{eq:te2card}) and
\begin{equation*}
p_{L_1} \cup \ldots \cup p_{L_m} = P_1 \dot{\cup} \ldots \dot{\cup} P_s \land 
\bigwedge_{i = 1}^s P_{i} \subseteq p_{L_{\sigma(i)}} \land \bigwedge_{i = 1}^s |P_i| = k_{\sigma(i)} 
\end{equation*}
follows that we can replace the formulae $L_i$ occurring in the symbolic table $\overline{s}$ by the bit-strings representing the elementary Venn regions to which the indices of the sets $P_i$ belong. Moreover, thanks to the condition $\cup_{i = 1}^N p_{\beta_i} = \dot{\cup}_{i = 1}^s P_{i}$, it follows that we can replace the letters $L_i$ by the bit-strings $\beta_i$. In this way, we obtain a table $\tau \in M(L_1,\ldots,L_m)$. We then define the corresponding word over $\mathcal{D}$, thanks to the property $
\bigwedge_{i = 1}^N \exists d. \phi^{\beta_{i}}(d) 
$. To conclude, note that:
\[
\Set{ n \in \{0,1\}^* | \tau_j(n) } = \cup_{\Set{ i | \beta_{i}(j) = 1}} P_{i} = \Set{ n \in \{0,1\}^* | \phi_j(d(n)) }
\]
Thus, we have that formula~(\ref{eq:table-power-card}) is satisfied by the set variables 
\[
S_j := \Set{ n \in \{0,1\}^* | \tau_j(n) } = \Set{ n \in \{0,1\}^* | \phi_j(d(n)) }
\]
\end{proof}

\end{document}